\numberwithin{equation}{section}
\renewcommand{\vec}[1]{\boldsymbol{#1}} 
\newcommand\numberthis{\addtocounter{equation}{1}\tag{\theequation}}
\providecommand{\U}[1]{\protect\rule{.1in}{.1in}}
\newtheorem{theorem}{Theorem}[section]
\newtheorem{corollary}{Corollary}[section]
\newtheorem{definition}{Definition}[section]
\newtheorem{example}{Example}
\newtheorem{lemma}{Lemma}[section]
\newtheorem{proposition}{Proposition}
\newtheorem{remark}[theorem]{Remark}
\newenvironment{proof}[1][Proof]{\noindent \textbf{#1.} }{\hfill
	\rule{0.5em}{0.5em}}
\definecolor{alizarin}{rgb}{0.82, 0.1, 0.26}
\newcommand{\blue}{\color{black}}
\newcommand{\black}{\color{black}}
\newcommand{\bluebis}[1]{\textcolor{black}{#1}}
\newcommand{\myparagraph}[1]{\paragraph{#1}\mbox{}\\}
\pgfplotsset{compat=1.17} 
\begin{document}

\title{\textbf{Actuarial-consistency and two-step actuarial valuations: a new paradigm to insurance valuation }}
\date{Version: \today}
\author{Karim Barigou\footnote{Univ Lyon, Université Claude Bernard Lyon 1,
Laboratoire de Sciences Actuarielle et Financière,
Institut de Science Financière et d’Assurances
(50 Avenue Tony Garnier, F-69007 Lyon, France). E-mail address: \href{mailto:karim.barigou@univ-lyon1.fr}{karim.barigou@univ-lyon1.fr}} 
\and
Dani\"{e}l Linders\footnote{University of Amsterdan, Netherlands. E-mail address: \href{mailto:dlinders@illinois.edu}{d.h.linders@uva.nl}}
\and Fan Yang\footnote{University of Waterloo, Waterloo, ON N2L 3G1, Canada. E-mail address: \href{mailto:fan.yang@uwaterloo.ca}{fan.yang@uwaterloo.ca}} 
}
\maketitle

\begin{abstract}
This paper introduces new valuation schemes called actuarial-consistent valuations for insurance liabilities which depend on both financial and actuarial risks, which  imposes that all actuarial risks are priced via standard actuarial principles. We propose to extend standard actuarial principles by a new actuarial-consistent procedure, which we call ``two-step actuarial valuations". \bluebis{In the case valuations are coherent}, we show that actuarial-consistent valuations are equivalent to two-step actuarial valuations. We also discuss the connection with ``two-step market-consistent valuations" from \cite{pelsser2014time}. In particular, we discuss how the dependence structure between actuarial and financial risks impacts both actuarial-consistent and market-consistent valuations. 
\medskip

\textbf{Keywords:} Fair valuation, two-step valuation, actuarial consistent, market consistent, Solvency II, incomplete market.

\end{abstract}

\section{Introduction}

Insurance liabilities, such as variable annuities, are complex combinations of different types of risks. Motivated by solvency regulations, the recent focus has been towards financial risks and the so-called market-consistent valuations. In this situation, the financial market is the main driver and actuarial risks only appear as the “second step”. This paper goes against the tide and introduces the concept of \textit{actuarial-consistent} valuations where actuarial risks are at the core of the valuation. We propose a two-step actuarial valuation that is first driven by actuarial information and is actuarial-consistent. Moreover, we show that actuarial-consistent valuations can always be expressed as the price of an appropriate hedging strategy.

Fundamental to insurance, an actuarial premium principle (called \textit{actuarial valuation} in this paper) is typically based on a diversification argument which
justifies applying the law of large numbers (LLN) among independent policyholders who face identical risks (see \citealp{denuit2006actuarial}). \blue Consistent with insurance regulation, the actuarial valuation $\rho(S)$ of a claim $S$ is represented as the expectation under the real-world probability measure $\mathbb{P}$ plus an additional risk margin to cover any undiversified and systematic risk\black, that is
\begin{equation}
    \rho (S) = \mathbb{E} ^{\mathbb{P}}\left[ S \right] + \text{Risk margin.}\label{actuarialvaluation}
\end{equation}

Financial valuation, on the other hand, is based on the no-arbitrage principal. Given the prices of the available traded assets, the value of a financial claim should be determined such that the market remains free of arbitrage if the claim is traded. Therefore, financial pricing is based on the idea of hedging and replication. It  was shown in  \cite{delbaen2006mathematics} that no-arbitrage pricing implies that the prices of contingent claims can be expressed as expectations under a so-called risk-neutral measure $\mathbb{Q}$, that is
\begin{equation}
    \rho (S) = \mathbb{E} ^{\mathbb{Q}}\left[ S \right]. \label{financialvaluation}
\end{equation}
This approach dates back to the seminal paper of \cite{black1973pricing}.

Insurance claims are nowadays non-trivial combinations of \blue diversifiable and undiversifiable insurance risk, and traded financial risks. \black It is therefore primordial to build a valuation framework which combines traditional actuarial and financial valuation. \bluebis{Over the last two decades, several researchers have worked on the interplay between financial and actuarial valuation. \cite{Embrechts2000ActuarialVF} offers a detailed comparison of insurance and finance pricing mechanisms. \cite{moller2002valuation} addresses the aspects of the interplay between finance and insurance by combining traditional actuarial and financial pricing principles.} A simplifying approach in the actuarial literature is to assume independence between actuarial and financial risks\footnote{This assumption is either made under the real-world measure $\mathbb{P}$ or the risk-neutral measure $\mathbb{Q}$. We note that the independence under $\mathbb{P}$ does not necessarily imply the independence under $\mathbb{Q}$; see \cite{dhaene2013dependence}. We also discuss this point in Lemma \ref{lemmaind}.} such that the valuation can be split into a product of actuarial and financial valuations (see \citealp{fung2014systematic}; \citealp{da2017valuing}; \citealp{ignatieva2016pricing}; \citealp{Wuthrich_2016} among others). However, the emergence of longevity-linked financial products and the pandemic situation showed us that future mortality cannot be assumed independent from evolution of the financial market \citep{sharif2020covid,harjoto2021equity}. For this reason, different authors proposed general valuation approaches allowing for dependencies between actuarial and financial risks. \bluebis{For instance, valuation under dependent mortality and interest risks was investigated in \cite{liu2014generalized,deelstra2016role,zhao2018efficient}.} Moreover, \cite{pelsser2014time} proposed a `two-step market-consistent valuation' which extends standard actuarial principles by conditioning on the financial information. \cite{dhaene2017} proposed a new framework for the fair valuation of insurance liabilities in a one-period setting; see also \cite{Dhaene_2021}. The authors introduced the notion of a `fair valuation', which they defined as a valuation which is both market-consistent (mark-to-market for any hedgeable part of a claim) and actuarial (mark-to-model for any claim that is independent of financial market evolution). This work was further extended in a multi-period discrete setting in \cite{barigoumultiperiod} and in continuous time in \cite{delong2019fair}.  A  3-step valuation was introduced in \cite{deelstra_devolder_gnameho_hieber_2020} for the valuation of claims which consists of traded, financial but also systematic risks. This approach was further generalized  in \cite{Linders_2021}.


\blue
\myparagraph{Market-consistent valuation and its shortfalls}
The above-mentioned papers propose different valuation principles which  have in common that they are all market-consistent valuations. The Solvency II insurance regulation directive introduced a prospective and risk-based supervisory approach on January 1, 2016. Pillar 1 of this directive requires a market-consistent valuation of the insurance liabilities; see e.g.\ \cite{mohr_2011}. A market-consistent valuation assumes an investment in an appropriate replicating portfolio to offset the hedgeable part of the liability. The remaining part of the claim is managed by diversification and an appropriate capital buffer. However, concerns about the appropriateness of the market-consistent valuation for long-term insurance business were raised: 
\begin{itemize}
    \item \cite{lecourtois2021utility} pointed out market-consistency can lead to a substantial misvaluation of the effective wealth of an insurance company that deals with long-term commitments. This approach also induces high instability and excess volatility in the balance sheet indicators of an insurance company \citep{vedani2017market,rae2018review}.
    \item \cite{plantin2008marking} found that the damage done by marking-to-market (that is market-consistency) is greatest when claims are illiquid and long term, which is precisely the case of balance sheet of insurance companies. 
    \item Market-consistency tends to be pro-cyclical and the use of a 1-year Value-at-Risk increases the risk of herd behaviour, hence reducing financial stability. Market-consistent valuation tends to minimise the value of liabilities when markets are bullish and over-estimate them in times of crisis \citep{rae2018review}.
\end{itemize}
\myparagraph{Alternative valuations to market-consistency}
The second pillar of the Solvency II directive allows the insurer to proceed to an alternative assessment of the company’s overall solvency needs using different recognition and valuation bases.
In this purpose, \cite{lecourtois2021utility} replaced the market-consistent framework by a utility-consistent framework that accounts for the risk aversion of the market and the long-term nature of liabilities. The framework is not anymore market-consistent but market-implied and utility-consistent. The authors found that this alternative valuation provides less volatility than the traditional market-consistent approach. \cite{muermann2003actuarially} investigates the valuation of catastrophe derivatives and proposes an actuarial-consistent valuation approach that is consistent with existing insurance premiums to exclude arbitrage opportunities.
\black

\myparagraph{Why actuarial-consistency is an important alternative ?}
In the spirit of \cite{lecourtois2021utility,muermann2003actuarially}, we introduce the class of actuarial-consistent valuations for hybrid claims as an alternative for the market-consistent valuations. This new valuation framework is motivated by the requirement that any actuarial claim \blue(such as a pure endowment) \black should be priced via an actuarial valuation and should not be managed using a risky investment. We will label this property of the valuation ``actuarial-consistency". \blue Such property plays an important role in life insurance where the development of the longevity market and the longevity-linked securities is growing in the recent years \citep{blake2020longevity}. Indeed, in several countries, the valuation of the longevity risk is often required to follow regulatory life tables that impose minimum loadings for actuarial claims. Under a market-consistent setting, the presence of undervalued longevity-linked securities might create a potential undervaluation of life insurance business. Therefore, we believe it is relevant, for certain insurance claims, to investigate valuation mechanisms that are not market-consistent and would encourage risk-free investments rather than longevity transfers to the capital market.\footnote{As pointed out in \cite{blake2020longevity}, the Prudential Regulatory Authority (regulatory authority for insurance companies in the UK) expressed concerns that too much longevity risk is transferred offshore so that if the offshore reinsurance firm failed, UK pensioners might not get their pensions.} \black

We also introduce the two-step actuarial valuations. Instead of first considering the hedgeable part of the claim, the two-step actuarial valuation will first price the actuarial part of the claim using an actuarial valuation. We show that every two-step actuarial valuation is actuarial-consistent. Moreover, \bluebis{if the valuation is coherent}, we show the reciprocal: any actuarial-consistent valuation has a two-step actuarial representation. The two-step valuations are general in the sense that they do not impose linearity constraints on the actuarial and financial valuations. Therefore, they allow to account for the diversification of  actuarial risks and/or the incompleteness of the financial market (e.g.\ non-linear financial pricing with bid-ask prices).

Hedge-based valuations were first introduced in \cite{dhaene2017} to define the market-consistent valuations. We show that actuarial-consistent valuations can always be expressed as hedge-based valuations. The hedging strategy used in an actuarial-consistent valuation will only invest in the risk-free asset when valuating an actuarial claim. This is in contrast with the market-consistent hedge-based valuations, which may use the financial market to hedge actuarial claims. 

The paper also provides a detailed comparison between two-step market and two-step actuarial valuations. We discuss how the dependence structure between actuarial and financial claims impacts both actuarial and market-consistent valuations. In the context of solvency regulations, we show how the two-step actuarial valuation can be decomposed into a best estimate (expected value) plus a risk margin to cover the uncertainty in the actuarial risks. The procedure will be illustrated with a portfolio of life insurance contracts with dependent financial and actuarial risks.

The rest of the paper is structured as follows. In Section \ref{sectionTS}, we describe financial and actuarial valuations. Section \ref{ACvaluations} discusses the notion of actuarial-consistency and introduces two-step actuarial valuations. In Section \ref{sectionFV}, we provide a detailed comparison between actuarial-consistent valuations and market-consistent valuations. Section \ref{sectionNI} presents a cost-of-capital valuation based on the two-step actuarial valuation and a detailed numerical application of the two-step actuarial valuation on a portfolio of equity-linked contracts. Section \ref{Conclu4} concludes the paper. 

\section{Actuarial and financial valuations}

\label{sectionTS}

All random variables introduced hereafter are defined on the
probability space $\left(  \Omega,\mathcal{F},\mathbb{P}\right)  $. Equalities
and inequalities between r.v.'s have to be understood in the $\mathbb{P}%
$-almost sure sense. \blue The space of bounded random variables is denoted by
$L^{\infty}(\Omega,\mathcal{F},\mathbb{P})$ or $L^{\infty}(\mathcal{F})$ for
short. \black A contingent claim is a random liability of an insurance company
that has to be paid at the deterministic future time $T$. Formally, a
discounted claim is modeled by the random variable $S\in L^{\infty
}(\mathcal{F})$. In what follows we are interested in the \emph{valuation} of
discounted claims.

Suppose a $\sigma$-algebra $\mathcal{G}\subset \mathcal{F}$ is the information
available to the agent. We define a $\mathcal{G}$-conditional valuation as follows.
\blue
\begin{definition}
[$\mathcal{G}$-conditional valuation]\label{def}A $\mathcal{G}$-conditional
valuation is a mapping $\Pi \left[  \cdot|\mathcal{G}\right]  :L^{\infty
}(\mathcal{F})\longrightarrow L^{\infty}(\mathcal{G})$ satisfying the
following properties:
\begin{itemize}
\item \underline{Normalization}: $\Pi \lbrack0|\mathcal{G}]=0$.

\item \underline{Translation-invariance}: For any $S\in L^{\infty}%
(\mathcal{F})$ and $\lambda \in L^{\infty}(\mathcal{G})$, we have
\[
\Pi \left[  S+\lambda|\mathcal{G}\right]  =\Pi \lbrack S|\mathcal{G}]+\lambda.
\]

\item \underline{Convexity}: For any $S_1,S_2\in L^{\infty}(\mathcal{F})$ and $\lambda \in L^{\infty}(\mathcal{G})$ with $ 0\leq \lambda \leq 1$, we have
\[
\Pi \left[ \lambda S_1+(1-\lambda)S_2|\mathcal{G}\right]  \leq \lambda \Pi \lbrack S_1|\mathcal{G}]+ (1-\lambda)\Pi \lbrack S_2|\mathcal{G}].
\]

\item \underline{Positive homogeneity}: For any $S\in L^{\infty}(\mathcal{F})$
and any positive $\lambda \in L^{\infty}(\mathcal{G})$, we have
\[
\Pi \left[  \lambda S|\mathcal{G}\right]  =\lambda \Pi \left[  S|\mathcal{G}%
\right].
\]

\end{itemize}
\end{definition}\black

Note that if $\mathcal{G}$ is chosen to be the trivial $\sigma$-algebra, then
$\Pi \left[  \cdot|\mathcal{G}\right]  $ is a real number and we simply write the valuation $\Pi
\lbrack \cdot]$ without conditioning on $\mathcal{G}$.

Apart from the properties in Definition \ref{def}, other properties that
a valuation may have are

\begin{itemize}
\item \underline{Monotonicity}: for any $S,R\in L^{\infty}(\mathcal{F})$,
$\Pi(S)\leq \Pi(R)$ if $S\leq R$.

\item \underline{Fatou property}: for $S,S_{1},S_{2},...\in L^{\infty}(\mathcal{F})$,
$\sup_{n\in \mathbb{N}}\left \Vert S_{n}\right \Vert <\infty$, and $S_{n}%
\overset{a.s.}{\rightarrow}S$, then
\[
\underset{n\rightarrow \infty}{\lim \inf}\Pi(S_{n})\geq \Pi \lbrack S].
\]

\end{itemize}

Coherent valuations (also known as coherent risk measures) can be
understood as a worst-case expectation with respect to some class of
probability measures. This can be motivated by the desire for robustness: the
valuator does not only want to rely on a single measure $\mathbb{P}$ for the
occurrence of future events but prefers to test a set of plausible measures
and value with the worst-case scenario. A few dual representations of coherent
valuations are available in the literature. Here we present the most popular
version, which is defined for bounded random variables on a general space; see e.g. \cite{delbaen2000coherent}.
\blue
\begin{definition}
[Coherent valuation]\label{dual} A coherent valuation with Fatou property $\Pi:L^{\infty}(\mathcal{F}%
)\rightarrow \mathbb{R}$ has the following representation
 for any $S\in L^{\infty}(\mathcal{F})$
\[
\Pi \lbrack S]=\sup_{Q\in \mathcal{M}}\mathbb{E}^{Q}\left[  S\right]  ,
\]
where $\mathcal{M}$ is a collection of probability measures absolutely
continuous w.r.t. $\mathbb{P}$.
\end{definition}\black

In general, one can show that the set of coherent valuations is positive
homogeneous, translation-invariant, subadditive and monotone, see e.g.
\cite{artzner1999coherent}.

Coherent valuations are linear valuations in the following sense.
\blue
\begin{definition}
[Linear valuation]A valuation $\Pi:L^{\infty}(\mathcal{F})\rightarrow
\mathbb{R}$ is linear, if there exists a finitely additive measure $Q$
absolutely continuous w.r.t. $\mathbb{P}$, $Q(\Omega)=1$, such that for any
$S\in L^{\infty}(\mathcal{F})$
\[
\Pi \left[  S\right]  =\mathbb{E}^{Q}\left[  S\right]  .
\]

\end{definition}\black

In this paper, we assume risks can be divided in two groups: financial and
actuarial risks. Then a hybrid discounted claim $S\in L^{\infty}(\mathcal{F})$
is a combination of both the actuarial as well as the financial risks. The
financial risks are traded on a public exchange and market participants can
buy and sell these financial risks at any quantity. The non-traded risks are
referred to as actuarial risks. We note that a similar split was considered in
\cite{Wuthrich_2016} between financial events (stocks, asset portfolio,
inflation-protected bonds, etc) and insurance technical events (death, car
accident, medical expenses, etc). Moreover, we remark that the issue of
insurance-linked securities (e.g.\ longevity bonds) implies that a non-traded
actuarial risk may become a traded financial risk.

\subsection{Financial Valuation}\label{FV}
\bluebis{We assume there is a financial market with $n^{(1)}+1$ traded assets and
denote by $\vec{Y}=(Y_{0},Y_{1},\ldots,Y_{n^{(1)}})$ the price process of
assets, where $Y_{0}$ is the risk-free asset assumed to be constant and equal
to $1$.  We also refer to the traded assets as financial risks.
The $\sigma$-algebra generated by the financial risks $\vec{Y}$ is denoted by
$\mathcal{F}^{(1)}$.}

A \emph{financial claim} is an $\mathcal{F}^{(1)}%
$-measurable random variable defined on the financial probability space.
Otherwise stated, a financial claim only depends on the financial risks
$\vec{Y}$ and its realization is completely known given the realization of the
financial risks $\vec{Y}$. The set of all discounted financial claims is
denoted by $L^{\infty}(\mathcal{F}^{(1)})$. We can always express a discounted
financial claim as a function of the financial risks. We have:
\begin{equation}
S^{(1)}=f\left(  \vec{Y}\right)  ,\  \text{if}\ S^{(1)}\in L^{\infty
}(\mathcal{F}^{(1)}),\label{FinClaim}%
\end{equation}
for some function $f$\footnote{We assume that all functions we encounter are
Borel measurable.}.

The financial risks are traded on the financial market and all market
participants can observe the prices at which each asset can be bought and
sold. Note, however, that we do not assume that the price at which one can buy
and sell at time $t$ is equal. We assume that a financial valuation principle
$\pi^{(1)}$ is available to determine the price at which one can buy the
traded payoffs:
\[
\pi^{(1)}:L^{\infty}(\mathcal{F}^{(1)})\rightarrow \mathbb{R}.
\]

The choice of the financial valuation principle $\pi^{(1)}$ depends on the
additional assumptions we impose on the financial market. Traditional
financial pricing assumes that the market is complete and arbitrage-free such
that the pricing rule is linear and unique \citep{black1973pricing}. However,
the existence of transaction costs and non-hedgeable payoffs lead to
non-linear and non-unique valuations. In such situations, the market will
decide which valuation principle is used and one has to use calibration to
back out the valuation principle from the available traded assets. Below, we
consider different possible choices for the financial valuation for different
market situations.
\begin{enumerate}
	\item \underline{The law of one price.}
	Assume the market is arbitrage-free, frictionless and that all financial assets are discretely traded in the market and can be bought and sold at a unique price. One can prove that in this market setting, the no-arbitrage condition is equivalent with the existence of an equivalent martingale measure (EMM) $\mathbb{Q}$ \citep{dalang1990equivalent}.
	In this financial market where one can buy and sell any asset at a unique price, the financial valuation principle can be determined as follows:
	$$\pi^{(1)}[S]=\mathbb{E}^{\mathbb{Q}}[S].$$
	 The financial valuation is in this situation a linear valuation principle. 
	
	\item \underline{Imperfect market and bid-ask prices.}
	In classical finance, markets are usually modelled as a counterparty for market participants. It is assumed that markets can accept any amount and direction of the trade (buy or sell) at the going market price. However, due to market imperfection, there is in practice a difference between the price the market is willing to buy (bid price) and the price the market is willing to sell (ask price). This difference, called the bid-ask spread, creates a two-price economy. In particular, the value $\pi^{(1)}[S]$ which corresponds with the price required by the market to take over the financial claim $S$ will typically be higher than the risk-neutral price. Indeed, the asymmetry in the market allows that market to take a more prudent approach when determining the price $\pi^{(1)}[S]$. Instead of using a single risk-neutral probability measure, a set of "stress-test measures" is selected from the set of martingale measures and the price is determined as the supremum of the expectations w.r.t. the stress-test measures:
	$$\pi^{(1)}[S]=\sup_{\mathbb{Q}\in\mathcal{Q}} \mathbb{E}^\mathbb{Q} \left[S \right],$$
	where $\mathcal{Q}$ is a convex set of  probability measures absolutely continuous with respect to the original probability. For more details on conic finance, we refer to \cite{madan2010markets} and \cite{madan2016applied}. 
\end{enumerate}

In the remainder of the paper, we consider coherent financial valuations to account for bid-ask spread and market imperfections. 
\subsection{Actuarial valuation}

\bluebis{Suppose there are $n^{(2)}$ non-traded risks (also called actuarial
risks) which we denote by $\vec{X}=(X_{1},\ldots,X_{n^{(2)}})$ the price
process of the actuarial risks. The $\sigma$-algebra generated by $\vec{X}$ is
denoted by $\mathcal{F}^{(2)}$.} 

An \emph{actuarial claim} is a $\mathcal{F}^{(2)}$-measurable random variable
defined on the actuarial probability space. Equivalently stated, an actuarial
claim only depends on $\vec{X}$ and its realization is completely known given
the realization of the actuarial risks $\vec{X}$. If we denote the set of
discounted actuarial claims by $L^{\infty}(\mathcal{F}^{(2)})$, we can write
any discounted claim $S^{(2)}\in L^{\infty}(\mathcal{F}^{(2)})$ as a function of the
actuarial risks:
\[
S^{(2)}=f\left(  \vec{X}\right)  ,\  \text{if}\ S^{(2)}\in L^{\infty
}(\mathcal{F}^{(2)}),
\]
for some function $f$.

We assume that a valuation principle $\pi^{(2)}$ is chosen to price actuarial
claims. The actuarial valuation principle $\pi^{(2)}$ is based on the idea of
pooling and diversification. A completely diversifiable portfolio
can be valued with its expectation under the physical measure $\mathbb{P}$.
However, there is always an amount of residual actuarial risk present because one cannot aggregate an infinite amount of policies. Moreover, there are also
systematic actuarial risks (e.g.\ longevity risk) which cannot be diversified
away. Indeed, aggregating a large number of systematic risks will not result in the desired risk reduction.
\bluebis{For this reason, actuarial valuations include a risk margin to cover any non-diversifiable risk.}
Below, we briefly discuss the most important actuarial valuation principles
(also called actuarial premium principles).
\begin{enumerate}
	\item \underline{Linear valuation:}
	$$\pi^{(2)}[S]=\mathbb{E}^{\tilde{\mathbb{P}}}\left[ S\right].$$
	The risk margin is modelled by an appropriate change of measure from $\mathbb{P}$ to $\tilde{\mathbb{P}}$. In terms of life tables, the change of measure can be interpreted as a switch from the second order life table  (best-estimate survival or death probabilities) to a first order life table (survival or death probabilities that are chosen with a safety margin). For more details, see for instance \cite{Wuthrich_2016} and \cite{norberg2014life}. 
	\item \underline{Standard deviation principle:}
	\begin{equation}\label{stdprinciple}
	\pi^{(2)}[S]=\mathbb{E}^{\mathbb{P}}\left[ S\right]+\beta\sqrt{\text{Var}^{\mathbb{P}}\left[ S\right]},
	\end{equation}
	with $\beta\geq0 $. \newline In this case, the loading equals $\beta$ times the standard deviation. It is well-known that $\beta>0$ is required in order to avoid getting ruin with probability 1 (see e.g. \cite{kaas2008modern}). We note that the standard deviation principle is neither linear nor coherent.
	\item \underline{Coherent valuation:}
	$$\pi^{(2)}[S]=\rho\left[ S\right],$$
	where $\rho$ is a coherent valuation. We
	recall that the coherent valuation can also be represented as a supremum of a set of measures (see Definition \ref{dual}). Therefore, model risk can be taken into account by considering a family of different distributions and the actuarial claim is valuated with the most conservative one. 
\end{enumerate}
%
%
%

\subsection{Hybrid claims}
\bluebis{Recall that a claim $S\in L^{\infty}(\mathcal{F})$ is defined on the
probability space $\left(  \Omega,\mathcal{F},\mathbb{P}\right)  $, which is a
combination of the financial and the actuarial probability space. The $\sigma
$-algebra $\mathcal{F}$ contains $\mathcal{F}^{(1)}$ and $\mathcal{F}^{(2)}$.
That is $\left(  \mathcal{F}^{(1)}\cup \mathcal{F}^{(2)}\right)  \subset
\mathcal{F}$.}

Assume the valuation $\Pi:L^{\infty}(\mathcal{F})\rightarrow \mathbb{R}$ is
used to valuate discounted claims in $L^{\infty}(\mathcal{F})$. A general
claim in $L^{\infty}(\mathcal{F})$ can contain both financial and actuarial
risks and therefore the valuation of claims in $L^{\infty}(\mathcal{F})$
cannot be solely based on the financial or the actuarial valuation principles.
\bluebis{Moreover, the financial and actuarial risks are dependent.}
Therefore, observing the values of financial (resp.\ actuarial) claims can
provide information about the valuation of actuarial (resp.\ financial) claims.

Define by $L^{\infty}(\mathcal{F}^{(1,\perp)})$ the set of financial claims
which are independent of the actuarial risks and by $L^{\infty}(\mathcal{F}%
^{(2,\perp)})$ the actuarial claims which are independent of the financial
information:
\begin{align*}
S^{(1,\perp)}\in L^{\infty}(\mathcal{F}^{(1,\perp)}) &  \text{if}%
\ S^{(1,\perp)}\in L^{\infty}(\mathcal{F}^{(1)})\text{ and}\ S^{(1,\perp
)}\perp \vec{X},\\
S^{(2,\perp)}\in L^{\infty}(\mathcal{F}^{(2,\perp)}) &  \text{if}%
\ S^{(2,\perp)}\in L^{\infty}(\mathcal{F}^{(2)})\text{ and}\ S^{(2,\perp
)}\perp \vec{Y}.
\end{align*}
We say that $S^{\left(  1,\perp \right)  }$ is a pure financial claim, whereas
$S^{\left(  2,\perp \right)  }$ is called a pure actuarial claim. A pure
financial claim does not contain any information about the actuarial risks and
therefore the actuarial valuation $\pi^{(2)}$ should not be used to valuate a
pure financial claim. Hence, the valuation of a pure financial claim should
only involve the financial valuation $\pi^{(1)}$. Similarly, the valuation of
a pure actuarial claim should only be based on the actuarial valuation
$\pi^{(2)}$. We require that the valuation principle $\Pi$ is consistent with
the financial valuation $\pi^{(1)}$ and the actuarial principle $\pi^{(2)}$ in
that $\Pi$ should correspond with financial valuation $\pi^{(1)}$ for pure
financial claims and with the actuarial valuation when considering pure
actuarial claims.

\begin{definition}
[Orthogonal-consistency]A valuation $\Pi:L^{\infty}(\mathcal{F})\rightarrow
\mathbb{R}$ is said to be orthogonal consistent with the financial valuation
$\pi^{(1)}$ and the actuarial valuation $\pi^{(2)}$ if we have that for
$i=1,2$,
\begin{equation}
\Pi \left[  S^{\left(  i,\perp \right)  }\right]  =\pi^{(i)}\left[  S^{\left(
i,\perp \right)  }\right]  ,\  \text{if}\ S^{\left(  i,\perp \right)  }%
\in L^{\infty}(\mathcal{F}^{\left(  i,\perp \right)  }).\label{AFconsistent}%
\end{equation}

\end{definition}

We will show later that the two-step actuarial valuation introduced in this
paper is orthogonal consistent with the financial and actuarial valuations.

Similar to \cite{dhaene2017}, a hybrid claim is a claim which depends on both
the actuarial as well as the financial information, i.e.\ a hybrid claim $S$
can be expressed as follows:
\[
S\  \text{is a hybrid claim}\leftrightarrow S\in L^{\infty}(\mathcal{F}).
\]
Different valuation frameworks can be considered depending on how financial
and actuarial valuations are merged together. In Section \ref{ACvaluations},
we propose a two-step valuation which applies a financial valuation after
conditioning on actuarial information. For this reason, we briefly introduce
the concept of conditional valuations hereafter.

\section{Actuarial-consistent valuations}\label{ACvaluations}

Given a financial valuation principle $\pi^{(1)}$ and an actuarial valuation principle $\pi^{(2)}$, we search for general valuations $\Pi$ that are consistent with both the financial valuation $\pi^{(1)}$ and the actuarial valuation $\pi^{(2)}$, and study their properties. 

This section starts with introducing the concept of \textit{actuarial-consistency}. Condition \eqref{AFconsistent} for a valuation $\Pi$ states that actuarial claims which are independent of the financial information, should be valued using the actuarial valuation principle $\pi^{(2)}$. A valuation $\Pi$ is actuarial-consistent if all actuarial claims are priced with an actuarial valuation, even the ones that may be dependent to financial information. 

\begin{definition}
[Actuarial-consistency] A valuation $\Pi$ is called
actuarial consistent (ACV) with an actuarial valuation $\pi^{(2)}$ if for any actuarial claim $S^{(2)}\in
L^{\infty}(\mathcal{F}^{(2)})$ the following holds:
\begin{equation}
\Pi \left[  S^{(2)}\right]  =\pi^{(2)}\left[  S^{(2)}\right]
.\label{Def:WeakACV}%
\end{equation}

\end{definition}%

Actuarial consistency postulates that an actuarial valuation is applied for all actuarial claims. Note that actuarial consistency is stronger than condition \eqref{AFconsistent}, which only states that independent actuarial claims are priced using the actuarial valuation. In \cite{dhaene2017}, the authors define a similar notion of actuarial consistency, but the condition only holds for the claims which are independent of the financial filtration $\mathbb{F}^{(1)}$. 

In this section we define two new classes of actuarial-consistent valuations. The first class are the actuarial hedge-based valuations and the second class are the two-step actuarial valuations.  

\subsection{Actuarial hedge-based valuations}
In this section we consider a one-period setting, where we value a claim at time $t=0$ and determine a corresponding risk management strategy. 

We start with introducing the class of actuarial-consistent hedgers. In \cite{dhaene2017}, the authors showed that any market-consistent valuation can be represented as the time-0 value of a market-consistent hedger in a one-period setting. This result was further generalized in \cite{barigoumultiperiod,chen2021fair}. Similarly, in this section, we establish the relationship between actuarial-consistent valuations and their corresponding hedgers.

A trading strategy $\nu$ is a real-valued vector $\left( \nu_0, \nu_1,\ldots, \nu_{n^{(1)}}\right)$ where the component $\nu_i$ denotes the number of units invested in asset $i$ at time $t=0$ until maturity with the asset 0 is the risk-free asset with constant interest rate $r$. We denote the set of all \bluebis{these static} trading strategies by $\Theta$. 
\begin{definition}
A hedger $\theta: L^{\infty}(\mathcal{F})\rightarrow \Theta $ is a function which maps a claim $S\in L^{\infty}(\mathcal{F})$ into a trading strategy $\theta_S$ and satisfies the following conditions
\begin{enumerate}
\item $\theta$ is normalized: $\theta_0=(0,0,\ldots,0)$.
\item $\theta$ is translation invariant: $\theta_{S+a}=\theta_S+\left(a, 0, \ldots, 0 \right)$, for any $S \in L^{\infty}(\mathcal{F})$ and $a\in\mathbb{R}.$
\end{enumerate}
\end{definition}
The trading strategy $\theta_S$ is called the hedge for the claim $S$. Now, we define the class of actuarial-consistent hedgers.

\begin{definition}
A hedger is said to be an actuarial-consistent hedger if there exists a valuation $\pi$ such that
$$\theta_{S^{(2)}} = \left(\pi[S^{(2)}], 0,0,\ldots,0 \right),\ \text{for any }\ S^{(2)}\in L^{\infty}(\mathcal{F}^{(2)}).  $$
\end{definition}
An actuarial-consistent hedger will only allow investments in the risk-free bank account for actuarial claims. The higher potential returns in risky assets can be used to protect against the future losses from a claim. However, when an investment in risky assets is used for managing an actuarial claim, the insurer will be exposed to movements on the financial market. By considering actuarial-consistent hedgers, the insurer only adds risky investments to the portfolio if the claim he is trying to hedge contains financial risks.

\begin{example}[Actuarial-consistent hedgers]
	Hereafter, we consider two examples of actuarial-consistent hedgers.
\begin{enumerate}
\item Consider the hedger $\theta$ such that
$$
\theta_S = \begin{cases}
 \left(\pi^{(2)}[S], 0,0,\ldots,0 \right),\ \text{for}\ S\in L^{\infty}(\mathcal{F}^{(2)})\\
\arg \min _{\boldsymbol{\mu} \in \Theta} \mathbb{E}^{\mathbb{P}}\left[(S-\boldsymbol{\mu} \cdot \boldsymbol{Y})^{2}\right],\ \text{for any}\ S\in L^{\infty}(\mathcal{F}) \setminus L^{\infty}(\mathcal{F}^{(2)}) ,
\end{cases}
$$
where $\boldsymbol{Y}$ is the financial risks defined in subsection
\ref{FV} and $\Theta$ is the set of all hedging strategies, that is the set of all $(n^{(1)}+1)$-
dimensional real-valued vectors. Such hedger invests risk-free for actuarial claims and invests following quadratic hedging for all remaining claims. By construction, such hedger is actuarial consistent. \bluebis{For details on risk-minimizing strategies for insurance processes, we refer to \citep{moller2001risk,delong2019fair}.}
\item Assume that an insurer considers the two-step actuarial valuation for any claim $S$:
	$$\Pi\left[S \right]=\pi^{(2)}\left[ \pi\left[S|\  \mathcal{F}^{(2)} \right]\right],$$
	where $\pi \left[  \cdot|\mathcal{F}^{(2)}\right]  :L^{\infty}%
(\mathcal{F})\longrightarrow L^{\infty}(\mathcal{F}^{(2)})$ is an $\mathcal{F}^{(2)}$-conditional
valuation and $\pi^{(2)}$ is an actuarial valuation.

	If the insurer invests the whole value in risk-free asset, the following hedger is used: 
	\begin{equation}\label{H1}
	\theta_{S} = \left(\Pi\left[S \right], 0,0,\ldots,0 \right),\ \text{for any}\ S \in L^{\infty}(\mathcal{F}). 
	\end{equation}
	Such hedger is naturally an actuarial-consistent hedger. 
	$\hfill\blacktriangleleft$
\end{enumerate}
\end{example}

In the following lemma, we show how one can decompose a hybrid claim into an actuarial and financial part and define an actuarial-consistent hedger.
\begin{lemma}\label{Lemma3}
Consider a hybrid claim $S \in L^{\infty}(\mathcal{F})$ and a hedger $\tilde{\theta}$. We decompose the claim $S$ into two parts as follows: 
\begin{eqnarray*}
H_S^{(2)} & = & \mathbb{E}\left[S|\ \mathcal{F}^{(2)} \right] - \mathbb{E}\left[S \right], \\
H_S^{(1)} & = & S- H_S^{(2)}.
\end{eqnarray*}
Note that $H_S^{(2)} \in L^{\infty}(\mathcal{F}^{(2)})$.
Define the hedger $\theta_S$ as follows:
$$\theta_S=\tilde{\theta}_{H_S^{(1)}}.$$
Then, $\theta_S$ is an actuarial hedger. 
\end{lemma}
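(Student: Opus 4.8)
The plan is to verify directly that the map $S \mapsto \theta_S := \tilde{\theta}_{H_S^{(1)}}$ satisfies the two hedger axioms (normalization and translation-invariance) and the defining property of an actuarial-consistent hedger, namely $\theta_{S^{(2)}} = (\pi^{(2)}[S^{(2)}], 0, \ldots, 0)$ for every $S^{(2)} \in \mathcal{C}^{(2)}$. Everything follows from two structural facts about the decomposition: that $H_S^{(2)}$ is indeed an actuarial claim (because $\mathbb{E}[S|\mathcal{F}^{(2)}]$ is $\mathcal{F}^{(2)}$-measurable and $\pi^{(2)}[S]$ is a real constant, so $H_S^{(2)}$ is $\mathcal{F}^{(2)}$-measurable), and that the financial residual $H_S^{(1)} = S - H_S^{(2)}$ transforms cleanly under constant shifts and under restriction to actuarial claims.

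First I would check normalization. Taking $S = 0$ and using $\pi^{(2)}[0] = 0$ gives $H_0^{(2)} = 0$, hence $H_0^{(1)} = 0$, so that $\theta_0 = \tilde{\theta}_0 = (0, \ldots, 0)$ by the normalization of $\tilde{\theta}$.

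Second, for translation-invariance, the key observation is that the actuarial part absorbs no constant: by linearity of conditional expectation and translation-invariance of $\pi^{(2)}$,
\[
H_{S+a}^{(2)} = \mathbb{E}[S|\mathcal{F}^{(2)}] + a - \big(\pi^{(2)}[S] + a\big) = H_S^{(2)},
\]
so that $H_{S+a}^{(1)} = (S+a) - H_S^{(2)} = H_S^{(1)} + a$. Applying the translation-invariance of $\tilde{\theta}$ then yields $\theta_{S+a} = \tilde{\theta}_{H_S^{(1)}+a} = \theta_S + (a, 0, \ldots, 0)$, so $\theta$ is a valid hedger.

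Finally, and this is the crux, I would verify actuarial-consistency. For $S = S^{(2)} \in \mathcal{C}^{(2)}$ we have $\mathbb{E}[S^{(2)}|\mathcal{F}^{(2)}] = S^{(2)}$, whence $H_{S^{(2)}}^{(2)} = S^{(2)} - \pi^{(2)}[S^{(2)}]$ and the financial residual collapses to the deterministic number $H_{S^{(2)}}^{(1)} = S^{(2)} - \big(S^{(2)} - \pi^{(2)}[S^{(2)}]\big) = \pi^{(2)}[S^{(2)}]$. Plugging this constant into $\tilde{\theta}$ and invoking its normalization and translation-invariance gives $\theta_{S^{(2)}} = \tilde{\theta}_{\pi^{(2)}[S^{(2)}]} = \tilde{\theta}_0 + (\pi^{(2)}[S^{(2)}], 0, \ldots, 0) = (\pi^{(2)}[S^{(2)}], 0, \ldots, 0)$, as required. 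I do not anticipate a genuine obstacle here; the only point demanding care is that $\tilde{\theta}$ is an arbitrary hedger, so each property of $\theta_S$ must be extracted purely from the two hedger axioms of $\tilde{\theta}$ together with the measurability and shift behaviour of the decomposition — the decisive mechanism being that the financial component $H^{(1)}$ degenerates to a constant exactly when the argument is an actuarial claim.
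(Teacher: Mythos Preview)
Your proof is correct and follows exactly the same route as the paper: the key step is that for $S^{(2)}\in\mathcal{C}^{(2)}$ one has $H_{S^{(2)}}^{(1)}=\pi^{(2)}[S^{(2)}]\in\mathbb{R}$, and then translation-invariance (together with normalization) of $\tilde{\theta}$ gives $\theta_{S^{(2)}}=(\pi^{(2)}[S^{(2)}],0,\ldots,0)$. In fact your write-up is more complete than the paper's, which only records the actuarial-consistency verification and leaves the normalization and translation-invariance checks implicit.
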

\begin{proof}
Consider an actuarial claim $S^{(2)}\in L^{\infty}(\mathcal{F}^{(2)})$. Then $S^{(2)}$ is $\mathcal{F}^{(2)}-$measurable and therefore $H_{S^{(2)}}^{(2)}=S^{(2)} - \mathbb{E}\left[S^{(2)} \right]$  and $H_{S^{(2)}}^{(1)}=\mathbb{E}\left[S^{(2)} \right] \in \mathbb{R}$. Since a hedger $\tilde{\theta}$ is translation invariant, we then find:
$$\theta_{S^{(2)}}=\tilde{\theta}_{H_{S^{(2)}}^{(1)}}=\left(\mathbb{E}\left[S^{(2)}\right], 0,0,\ldots,0 \right) ,$$
which shows we have an actuarial-consistent hedger.
\end{proof}

Lemma \ref{Lemma3} illustrates how one can derive an actuarial-consistent hedger using a general hedger. The claim $H_S^{(2)}$ can be interpreted as the actuarial part of the claim $S$. 

Assume that we have a claim $S$ and a hedger $\theta$. If we want to invest in the hedge $\theta_S$, we have to pay its time-$0$ value which is given by $\pi^{(1)}\left[ \theta_S\cdot \boldsymbol{Y}\right]$. In the next lemma, we show that the resulting financial value is actuarial consistent if the hedger is actuarial consistent. 

\begin{lemma}\label{Lemma_3} The following two statements are equivalent.
\begin{enumerate}
\item The valuation $\Pi$ can be expressed as follows 
\begin{equation}\label{L3.3}
    \Pi[S]=\pi^{(1)}\left[\theta_S\cdot \boldsymbol{Y}\right],
\end{equation}
    where $\theta$ is an actuarial-consistent hedger.
    \item $\Pi$ is actuarial consistent.
\end{enumerate}

\end{lemma}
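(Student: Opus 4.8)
The plan is to prove the two implications separately, with the common engine being the observation that a pure risk-free position is valued by $\pi^{(1)}$ at exactly the amount invested. Concretely, since the bank account is a traded asset priced consistently by the financial valuation, we have $\pi^{(1)}[Y_0]=1$, and positive homogeneity of the coherent financial valuation then gives $\pi^{(1)}[c\,Y_0]=c$ for every $c\in\mathbb{R}$. Equivalently, for the trading strategy $(c,0,\ldots,0)$ that holds only the bank account, $\pi^{(1)}[(c,0,\ldots,0)\cdot Y]=c$. I would isolate this identity first, because both directions collapse to it once the actuarial-consistency constraint on the hedger is invoked.

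For the implication (1) $\Rightarrow$ (2), I would take an arbitrary actuarial claim $S^{(2)}\in\mathcal{C}^{(2)}$ and use the defining property of an actuarial-consistent hedger, $\theta_{S^{(2)}}=(\pi^{(2)}[S^{(2)}],0,\ldots,0)$. Substituting into the representation $\Pi[S]=\pi^{(1)}[\theta_S\cdot Y]$ and applying the risk-free identity above with $c=\pi^{(2)}[S^{(2)}]$ yields $\Pi[S^{(2)}]=\pi^{(1)}[\pi^{(2)}[S^{(2)}]\,Y_0]=\pi^{(2)}[S^{(2)}]$, which is exactly weak actuarial-consistency \eqref{Def:WeakACV}.

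For the converse (2) $\Rightarrow$ (1), I would exhibit the explicit ``invest everything in the risk-free asset'' hedger already appearing in \eqref{H1}, namely $\theta_S=(\Pi[S],0,\ldots,0)$, and verify it does the job. First, the representation holds: by the risk-free identity, $\pi^{(1)}[\theta_S\cdot Y]=\pi^{(1)}[\Pi[S]\,Y_0]=\Pi[S]$. Second, $\theta$ is a genuine hedger: normalization $\theta_0=(\Pi[0],0,\ldots,0)=(0,\ldots,0)$ follows from $\Pi[0]=0$, and translation-invariance $\theta_{S+a}=(\Pi[S+a],0,\ldots,0)=(\Pi[S]+a,0,\ldots,0)=\theta_S+(a,0,\ldots,0)$ follows from that of $\Pi$. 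Third, $\theta$ is actuarial-consistent: for $S^{(2)}\in\mathcal{C}^{(2)}$, weak actuarial-consistency gives $\theta_{S^{(2)}}=(\Pi[S^{(2)}],0,\ldots,0)=(\pi^{(2)}[S^{(2)}],0,\ldots,0)$, which is precisely the required form.

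I expect the only genuine subtlety — and hence the step to state carefully — to be the risk-free pricing identity $\pi^{(1)}[c\,Y_0]=c$, since it silently uses both the correct calibration of the bank account ($\pi^{(1)}[Y_0]=1$) and the positive homogeneity of the coherent financial valuation; once this is in hand, both directions are short and amount to bookkeeping on the hedger axioms.
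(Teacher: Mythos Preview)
Your proof is correct and follows essentially the same approach as the paper: for (1)$\Rightarrow$(2) you both plug the actuarial-consistent hedger's risk-free-only position into $\pi^{(1)}$, and for (2)$\Rightarrow$(1) you both use the trivial hedger $\theta_S=(\Pi[S],0,\ldots,0)$ from \eqref{H1}. The only cosmetic difference is that the paper dispatches the risk-free identity by simply invoking $Y_0=1$, whereas you route it through coherence; note that translation-invariance and normalization of $\pi^{(1)}$ already give $\pi^{(1)}[c]=c$ for all $c\in\mathbb{R}$, so positive homogeneity (which strictly speaking only covers nonnegative scalars) is not actually needed.
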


\begin{proof}
Assume $\theta$ is  an actuarial-consistent hedger and the valuation $\Pi$ is given by \eqref{L3.3}. Then for any actuarial claim $S^{(2)} \in L^{\infty}(\mathcal{F}^{(2)})$, we have that $\theta_{S^{(2)}}\cdot \boldsymbol{Y}= \pi^{(2)}\left[S^{(2)}Y_0\right]$. Taking into account $Y_0=1$, we find that $\Pi$ is  actuarial consistent.
Assume now that $\Pi$ is an actuarial-consistent valuation. Defining $\theta$ as in \eqref{H1} shows that $\Pi[S]=\pi^{(1)}\left[\theta_S\cdot \boldsymbol{Y}\right], $ where $\theta$ is an actuarial-consistent hedger.
\end{proof}

In order to determine a hedge-based value of $S,$ one first splits this claim into a hedgeable claim, which (partially) replicates $S$, and a remaining claim. The value of the claim $S$ is then defined as the sum of the financial value of the hedgeable claim and the actuarial value of the remaining claim, determined according to a pre-specified actuarial valuation.

\begin{definition}
A valuation $\Pi$ is called an actuarial hedge-based valuation with financial valuation $\pi^{(1)}$ and \bluebis{actuarial-consistent valuation} $\pi$ if it can be expressed as follows:
$$\Pi[S]=\pi^{(1)}\left[ \theta_S\cdot \boldsymbol{Y}\right] + \pi\left[S- \theta_S\cdot \boldsymbol{Y}\right],$$
where $\theta_S$ is an actuarial-consistent hedger.
\end{definition}

In the following theorem, we prove that the class of
 actuarial-consistent valuations is equal to the class of actuarial hedge-based valuations.

\begin{theorem}\label{thactuarialHB}
    A valuation $\Pi$ is an  actuarial-consistent valuation if, and only if, it is an actuarial hedge-based valuation.
\end{theorem}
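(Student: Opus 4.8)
The plan is to prove the stated equivalence by establishing the two implications separately. One direction, that every actuarial hedge-based valuation is weak actuarial-consistent, is a direct verification on actuarial claims. The converse, that every weak actuarial-consistent valuation admits an actuarial hedge-based representation, requires exhibiting an appropriate actuarial-consistent hedger, and this is where the real difficulty lies.

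For the implication ``actuarial hedge-based $\Rightarrow$ weak ACV'', I would fix an arbitrary actuarial claim $S^{(2)} \in \mathcal{C}^{(2)}$ and exploit that $\theta$ is an actuarial-consistent hedger, so that $\theta_{S^{(2)}} = (\pi^{(2)}[S^{(2)}], 0, \ldots, 0)$. Then $\theta_{S^{(2)}} \cdot Y = \pi^{(2)}[S^{(2)}]$ is a constant (using $Y_0 = 1$, as in the proof of Lemma \ref{Lemma_3}), so the financial term reduces to $\pi^{(1)}[\theta_{S^{(2)}} \cdot Y] = \pi^{(2)}[S^{(2)}]$ by the normalization and translation-invariance of $\pi^{(1)}$. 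The residual claim $S^{(2)} - \theta_{S^{(2)}} \cdot Y = S^{(2)} - \pi^{(2)}[S^{(2)}]$ is again actuarial, and translation-invariance of $\pi^{(2)}$ gives $\pi^{(2)}[S^{(2)} - \pi^{(2)}[S^{(2)}]] = 0$. Summing the two contributions yields $\Pi[S^{(2)}] = \pi^{(2)}[S^{(2)}]$, which is precisely weak actuarial-consistency \eqref{Def:WeakACV}.

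For the converse ``weak ACV $\Rightarrow$ actuarial hedge-based'', the goal is to produce an actuarial-consistent hedger $\theta$ with $\Pi[S] = \pi^{(1)}[\theta_S \cdot Y] + \pi^{(2)}[S - \theta_S \cdot Y]$ for every $S$. I would first dispose of the actuarial claims, where the risk-free choice $\theta_{S^{(2)}} = (\pi^{(2)}[S^{(2)}], 0, \ldots, 0) = (\Pi[S^{(2)}], 0, \ldots, 0)$ works and is forced by actuarial-consistency. For hybrid claims I would start from the pure risk-free hedger \eqref{H1}, which by Lemma \ref{Lemma_3} already reproduces the financial value $\pi^{(1)}[\theta_S \cdot Y] = \Pi[S]$; the difficulty is that its residual $S - \Pi[S]$ need not have zero actuarial value, so \eqref{H1} alone does not close the argument. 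The main obstacle is therefore to choose the risky part of the hedge so that the residual actuarial value $\pi^{(2)}[S - \theta_S \cdot Y]$ vanishes while the financial value is held at $\Pi[S]$; here I would lean on the decomposition of Lemma \ref{Lemma3}, which isolates a zero-valued actuarial component $H_S^{(2)}$ (satisfying $\pi^{(2)}[H_S^{(2)}] = 0$), and use it together with Lemma \ref{Lemma_3} and the translation-invariance of the two valuation principles to reconcile the financial-value constraint with the vanishing of the actuarial residual.
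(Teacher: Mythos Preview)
Your first direction (actuarial hedge-based $\Rightarrow$ weak ACV) is correct and matches the paper's argument.

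In the converse direction you have manufactured a difficulty that the paper sidesteps. You correctly note that the trivial hedger $\theta_S=(\Pi[S],0,\ldots,0)$ from \eqref{H1} gives $\pi^{(1)}[\theta_S\cdot Y]=\Pi[S]$, and then worry that the residual $S-\Pi[S]$ need not have zero \emph{$\pi^{(2)}$-value}. The paper's move is simply to use $\Pi$ itself as the ``actuarial valuation'' appearing in the hedge-based formula: weak ACV guarantees that $\Pi$ restricted to $\mathcal{C}^{(2)}$ coincides with $\pi^{(2)}$, so the associated hedger is still actuarial-consistent with $\pi^{(2)}$, and translation-invariance of $\Pi$ gives $\Pi[S-\Pi[S]]=0$ for every $S$. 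Thus
\[
\Pi'[S]:=\pi^{(1)}[\theta_S\cdot Y]+\Pi[S-\theta_S\cdot Y]=\Pi[S]+0=\Pi[S],
\]
and $\Pi'$ is an actuarial hedge-based valuation by construction. No further hedging adjustment is needed.

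Your proposed route via Lemma~\ref{Lemma3} is not wrong in spirit, but it is left vague precisely at the point where it would have to do work: you never say how the decomposition $S=H_S^{(1)}+H_S^{(2)}$ produces a hedger $\theta_S$ for which both $\pi^{(1)}[\theta_S\cdot Y]=\Pi[S]-\pi^{(2)}[S-\theta_S\cdot Y]$ and the actuarial-consistency condition hold simultaneously. In fact, insisting that the residual $S-\theta_S\cdot Y$ be a genuine actuarial claim (so that the \emph{given} $\pi^{(2)}$ applies) would force $S$ to decompose as a financial claim plus an actuarial claim, which is far too restrictive. The paper avoids this trap by allowing the second-step valuation in the hedge-based representation to be any valuation extending $\pi^{(2)}$ on $\mathcal{C}^{(2)}$; once you accept that reading, the trivial hedger closes the argument in one line.
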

\begin{proof}
Assume $\Pi$ is an actuarial hedge-based valuation, i.e.\ we have that
$$\Pi[S]=\pi^{(1)}\left[\theta_S\cdot \boldsymbol{Y}\right]+\pi \left[S-\theta_S\cdot \boldsymbol{Y}\right],$$
where $\theta_S$ is an actuarial hedger. Consider an actuarial claim $S^{(2)}\in L^{\infty}(\mathcal{F}^{(2)})$. Then $$\theta_{S^{(2)}}=\left(\pi^{(2)}\left[S^{(2)} \right],0,0,\ldots,0 \right),$$ for some actuarial valuation $\pi^{(2)}$. 
Then it is straightforward to verify that
$$\Pi\left[S^{(2)} \right]=\pi \left[S^{(2)}\right],$$
which shows that an actuarial hedge-based valuation is a  actuarial-consistent valuation.

Assume now that $\Pi$ is an actuarial-consistent valuation. Then it follows from Lemma \ref{Lemma_3} that 
$$\Pi[S]=\pi^{(1)}\left[\theta_S\cdot \boldsymbol{Y} \right],$$
where 
$\theta_S=\left(\Pi[S],0,\ldots,0 \right)$.
Define the valuation $\Pi'$ as follows:
$$\Pi'[S]= \pi^{(1)}\left[\theta_S\cdot \boldsymbol{Y} \right]+\Pi\left[S-\theta_S\cdot \boldsymbol{Y} \right].$$
Then $\Pi'$ is an actuarial hedge-based valuation. Moreover, since $\theta_S\cdot \boldsymbol{Y}=\Pi[S]$, we also find that $\Pi'=\Pi$, which ends the proof.
\end{proof}

We remark that all the results in this subsection are satisfied for general valuations that are only normalized and translation-invariant. Positive homogeneity and convexity properties, which are assumed for a valuation in Definition \ref{def}, are not necessary for the equivalence result of Theorem \eqref{thactuarialHB}  to hold.

\subsection{Two-step actuarial valuations}

Hereafter, we introduce a class of actuarial-consistent valuations which we
call \textit{two-step actuarial valuations}. More specifically, in a first
step we compute the financial value of $S$ conditional on actuarial scenarios
(the values of the actuarial assets $X$), i.e. $\pi \left[  S|\  \mathcal{F}%
^{(2)}\right]  $. Since this conditional payoff depends only on actuarial
scenarios and is then $\mathcal{F}^{(2)}$-measurable, in the second step the
quantity $\pi \left[  S|\  \mathcal{F}^{(2)}\right]  $ should be valuated via a
standard actuarial valuation $\pi^{(2)}$. \bluebis{We remark that the two-step actuarial valuation is the reversed version of the two-step market valuation of \cite{pelsser2014time}, where financial and actuarial valuations are applied in different orders. We will show in Lemma \ref{lemmaind}  that under strong assumptions, both valuations lead to the same result.}
\blue
\begin{definition}
[Two-step actuarial valuation] The valuation $\Pi$ is called a two-step
actuarial valuation if there exists an $\mathcal{F}^{(2)}$-conditional
valuation $\pi \left[  \cdot|\mathcal{F}^{(2)}\right]  :L^{\infty}%
(\mathcal{F})\longrightarrow L^{\infty}(\mathcal{F}^{(2)})$ such that
\begin{equation}\label{eq:twostepactuarial}
    \Pi \left[  S\right]  =\pi^{(2)}\left[  \pi \left[  S|\  \mathcal{F}%
^{(2)}\right]  \right]  ,
\end{equation}
where $\pi^{(2)}:L^{\infty}(\mathcal{F}^{(2)})\rightarrow \mathbb{R}$ is an
actuarial valuation. 
\end{definition}
\black

Hence, the two-step actuarial valuation consists of applying the
market-adjusted valuation to the residual risk which remains after having
conditioned on the future development of the actuarial risks, i.e.
$\mathcal{F}^{(2)}$. It is straightforward to verify that the two-step
actuarial valuation is orthogonal consistent.

\blue
\begin{example}
As an example of a two-step actuarial valuation, we note that \cite{moller2002valuation} proposed a modified version of the standard deviation principle: 
    \begin{equation}
\Pi\left[S \right]=\mathrm{E}^{\mathbb{Q}}\left[S \right]+a\left(\operatorname{Var}\left[\mathrm{E}^{\mathbb{Q}}\left[S \mid \mathcal{F}^{(2)}\right]\right]\right)^{1 / 2},
\end{equation}
which corresponds to applying the traditional standard deviation principle to the no-arbitrage price of $S$ conditional on the actuarial filtration (see Equation (5.5) in \cite{moller2002valuation}).
\end{example}\black

In the following theorem, we show that any two-step actuarial valuation is actuarial-consistent. Moreover, if the valuation is coherent, we provide a characterization of the two-step actuarial valuation. 

\begin{theorem}
[Characterization of actuarial consistency]\label{characterization} The
following  properties hold:

\begin{enumerate}

\item Any two-step actuarial valuation $\Pi$ is  actuarial-consistent.

\item \blue If $\Pi$ is a coherent and actuarial-consistent valuation with a linear
actuarial valuation $\pi^{(2)}:L^{\infty}(\mathcal{F}^{(2)})\rightarrow
\mathbb{R}$, then there exists an $\mathcal{F}^{(2)}$-conditional coherent
valuation $\pi \left[  \cdot|\mathcal{F}^{(2)}\right]  :L^{\infty}%
(\mathcal{F})\rightarrow L^{\infty}(\mathcal{F}^{(2)})$ such that for any
$S\in L^{\infty}(\mathcal{F})$,
\[
\Pi \lbrack S]=\mathbb{\pi}^{(2)}\left[  \pi \left[  S|\mathcal{F}^{(2)}\right]
\right]  ,
\]
where
\[
\pi \lbrack S|\mathcal{F}^{(2)}]=\underset{Z\in \mathcal{R}}{\mathrm{ess\,sup}%
}\, \mathbb{E}[ZS|\mathcal{F}^{(2)}],
\]
with $\mathcal{R}=\{ \xi \in L^{1}(\mathcal{F}):\mathbb{E}[\xi|\mathcal{F}%
^{(2)}]=1\}$.\black
\end{enumerate}
\end{theorem}

\begin{proof}
(1) To prove that $\Pi$ is  actuarial consistent, it is sufficient to
notice that for any $S^{(2)}\in L^{\infty}(\mathcal{F}^{(2)})$,
\begin{align*}
\Pi \left[  S^{(2)}\right]   &  =\pi^{(2)}\left[  \pi \left[  S^{(2)}%
|\mathcal{F}^{(2)}\right]  \right]  \\
&  =\pi^{(2)}\left[  S^{(2)}\right]  ,
\end{align*}
where we have used that $S^{(2)}$ is $\mathcal{F}^{(2)}$-measurable.

(2) \blue Because $\Pi$ is coherent, we have that
\[
\Pi \lbrack S]=\sup_{\xi \in \mathcal{M}}\mathbb{E}\left[  \xi S\right]  .
\]
Since $\pi^{(2)}$ is linear, there exists $\varphi^{(2)}\in \mathcal{M}%
^{(2)}\mathcal{=}\left \{  \xi \in L^{1}(\mathcal{F}^{(2)}):\mathbb{E}%
[\xi]=1\right \}  $ such that
\[
\pi^{(2)}\left[  S^{(2)}\right]  =\mathbb{E}\left[\varphi^{(2)}S^{(2)}\right].
\]
By the same arguments in the proof of Proposition 3.3 of
\cite{pelsser2014time}, we can decompose $\xi$ as $\xi=\varphi^{(2)}Z$ with
$Z\in \mathcal{R}=\{ \xi \in L^{1}(\mathcal{F}):\mathbb{E}[\xi|\mathcal{F}%
^{(2)}]=1\}$. Thus
\begin{align*}
\Pi \lbrack S] &  =\sup_{\xi \in \mathcal{M}}\mathbb{E}\left[  \xi S\right]  \\
&  =\sup_{Z\in \mathcal{R}}\mathbb{E}\left[  \varphi^{(2)}ZS\right]  \\
&  =\sup_{Z\in \mathcal{R}}\mathbb{E}\left[  \varphi^{(2)}\mathbb{E}%
[ZS|\mathcal{F}^{(2)}]\right]  .
\end{align*}
Now we define $\pi \lbrack S|\mathcal{F}^{(2)}]=\underset{Z\in \mathcal{R}}{\mathrm{ess\,sup}%
}\,\mathbb{E}[ZS|\mathcal{F}^{(2)}]$. It is straightforward to verify
that $\pi \lbrack S|\mathcal{F}^{(2)}]$ is a $\mathcal{F}^{(2)}$-conditional
valuation. All we are left to show is
\[
\sup_{Z\in \mathcal{R}}\mathbb{E}\left[  \varphi^{(2)}\mathbb{E}[ZS|\mathcal{F}%
^{(2)}]\right]  =\mathbb{E}\left[  \varphi^{(2)}\underset{Z\in \mathcal{R}%
}{\mathrm{ess\,sup}%
}\,\mathbb{E}[ZS|\mathcal{F}^{(2)}]\right]  .
\]
This can be proved using the same arguments in the proof of Theorem 3.10 of
\cite{pelsser2014time}.\black
\end{proof}
\bigskip

Combining Theorems \ref{thactuarialHB} and \ref{characterization}, we find the following representation corollary.

\begin{corollary}
The following holds:
\begin{enumerate}
\item Any two-step actuarial and any actuarial hedge-based valuation is actuarial-consistent. 
\item Any actuarial-consistent valuation is an actuarial hedge-based valuation. 
\end{enumerate}
\end{corollary}

\subsection{ Comparisons of two-step actuarial valuation and two-step financial valuation}\label{sectionFV}

Motivated by solvency regulations, the recent actuarial literature focused on market-consistent valuations which essentially require that financial risks are priced with a risk-neutral valuation. In this section, we provide a detailed comparison between actuarial-consistent and market-consistent valuations. 



\begin{definition}[Strong market-consistency]
	A valuation $\Pi$ is called strong market-consistent (strong MCV) if for any financial claim $S^{(1)}$ the following holds:
	\begin{equation}\label{Def:StrongMCV}
	\Pi\left[S+ S^{(1)}\right]=\Pi\left[ S\right]+\pi^{(1)}\left[ S^{(1)}\right].
	\end{equation}
\end{definition}
In the literature, market-consistency is usually defined via a condition identical or similar to the condition \eqref{Def:StrongMCV} (see e.g. \cite{pelsser2014time}, \cite{dhaene2017} and \cite{barigou2018fair}). \blue When the financial valuation $\pi^{(1)}$ is linear, Proposition 3.3 of \cite{pelsser2014time} shows that the
strong market-consistency is equivalent to the following weak market-consistency. \black

\begin{definition}[Weak market-consistency]
	A valuation $\Pi$ is called weak market-consistent (weak MCV) if for any financial claim $S^{(1)}$ the following holds:
	\begin{equation}\label{Def:WeakMCV}
	\Pi\left[ S^{(1)}\right]=\pi^{(1)}\left[ S^{(1)}\right].
	\end{equation}
\end{definition}

We remark that \cite{assa2018market} also investigated these two types of market-consistency, that they called market-consistency of type I and type II.

Following the definition of the two-step market valuation in \cite{pelsser2014time}, we define the class of two-step financial valuations.
\begin{definition}
[Two-step financial valuation]\label{Def:2-finstep} The valuation $\Pi$ is
called a two-step financial valuation if there exists an $\mathcal{F}^{(1)}%
$-conditional valuation $\pi \left[  \cdot|\mathcal{F}^{(1)}\right]
:L^{\infty}(\mathcal{F})\longrightarrow L^{\infty}(\mathcal{F}^{(1)})$ such
that
\[
\Pi \left[  S\right]  =\pi^{(1)}\left[  \pi \left[  S|\  \mathcal{F}%
^{(1)}\right]  \right]  ,
\]
where $\pi^{(1)}:L^{\infty}(\mathcal{F}^{(1)})\rightarrow \mathbb{R}$ is a
financial valuation.
\end{definition}

After having defined two broad classes of two-step valuations: market-consistent and actuarial-consistent valuations, a natural question arises: Could we always define a \textit{fair} valuation that is market-consistent \textit{and} actuarial-consistent? 
\begin{definition}[Fair valuation]
	A two-step valuation $\Pi$ is fair if it is weak market consistent and actuarial consistent. 
\end{definition}

In general, it will not always be possible to define a fair valuation. Indeed, in a general probability space in which financial and actuarial risks are dependent, there is ambiguity on the valuation to be used: a market-consistent valuation calibrated to market prices \textit{or} an actuarial-consistent valuation calibrated to historical actuarial data.

In the following lemma, we show that when financial risks and actuarial risks
are independent, then the two-step financial valuation and the two-step actuarial valuation coincide. 
\blue
\begin{lemma}\label{lemmaind}
Assume that $\mathcal{F}^{(1)}$ and $\mathcal{F}^{(2)}$ are independent and
for any $S\in L^{\infty}(\mathcal{F})$ it can be expressed as $S=S^{(1)}%
S^{(2)}$ with $S^{(1)}\in L^{\infty}(\mathcal{F}^{(1)})$ and $S^{(2)}\in
L^{\infty}(\mathcal{F}^{(2)})$. If a two-step valuation $\Pi:L^{\infty
}(\mathcal{F})\longrightarrow \mathbb{R}$ with a linear second step valuation
is a fair valuation, then
\[
\Pi \lbrack S]=\pi^{(1)}\left[  S^{(1)}\right]  \pi^{(2)}\left[S^{(2)}\right],
\]
where $\pi^{(1)}:L^{\infty}(\mathcal{F}^{(1)})\longrightarrow \mathbb{R}$ and
$\pi^{(2)}:L^{\infty}(\mathcal{F}^{(2)})\longrightarrow \mathbb{R}$.
\end{lemma}

\begin{proof}
Since $\Pi$ is actuarial consistent with a linear second step actuarial valuation, there
exists a $\mathcal{F}^{(2)}$-conditional valuation $\pi \left[  \cdot
|\mathcal{F}^{(2)}\right]  :L^{\infty}(\mathcal{F})\longrightarrow L^{\infty
}(\mathcal{F}^{(2)})$ such that
\begin{align*}
\Pi \left[  S\right]   &  =\pi^{(2)}\left[  \pi \left[  S^{(1)}S^{(2)}%
|\mathcal{F}^{(2)}\right]  \right]  \\
&  =\pi^{(2)}\left[  S^{(2)}\pi \left[  S^{(1)}|\mathcal{F}^{(2)}\right]
\right]  \\
&  =\pi^{(2)}\left[S^{(2)}\right]\Pi \left[  S^{(1)}\right]  ,
\end{align*}
where in the second step we used the positive homogeneity of $\pi \left[
\cdot|\mathcal{F}^{(2)}\right]  $ and the last step is due to the independence
of $\mathcal{F}^{(1)}$ and $\mathcal{F}^{(2)}$. Because $\Pi$ is also weak
market consistent, we have
\[
\Pi \left[  S^{(1)}\right]  =\pi^{(1)}\left[  S^{(1)}\right]  .
\]
Then the desired result follows.
\end{proof}
\black

With the emergence of the market for longevity derivatives, a valuator needs to make a choice between market-consistency and actuarial-consistency. For instance, consider a market with some traded longevity bonds and there is an issue of a new longevity product. One needs to decide to use either a market-consistent approach based on the traded longevity bonds in the market or an actuarial-consistent approach based on longevity trend assumptions. 

In the following example, we illustrate this point and compare a market-consistent and an actuarial-consistent valuation in the presence of a longevity bond. 
In particular, we compare the following two valuations: for any $S \in L^{\infty}(\mathcal{F})$, we define
\begin{align}
\Pi^{(1)}[S] &  =\mathbb{\pi}^{(2)}\left[  \mathbb{E}^{\mathbb{Q}}\left[
S|\mathcal{F}^{(2)}\right]  \right]  ,\label{p1}\\
\Pi^{(2)}[S] &  =\mathbb{E}^{\mathbb{Q}}\left[  \mathbb{\pi}^{(2)}\left[
S|\mathcal{F}^{(1)}\right]  \right]  .\label{p2}%
\end{align}
Here for simplicity, we abuse the notation a little. In (\ref{p1}), the
conditional valuation $\mathbb{E}^{\mathbb{Q}}\left[  S|\mathcal{F}%
^{(2)}\right]  $ is in fact $\mathbb{E}^{\mathbb{P}}\left[  ZS|\mathcal{F}%
^{(2)}\right]  $ for some $Z\in \{ \xi \in L_{+}^{1}(\mathcal{F}):\mathbb{E}%
[\xi|\mathcal{F}^{(2)}]=1\}$. That is $\mathbb{Q}$ in (\ref{p1}) is an
absolutely continuous probability measure with respect to $\mathbb{P}$
conditional on $\mathcal{F}^{(2)}$. In (\ref{p2}), $\mathbb{Q}$ is an
absolutely continuous probability measure with respect to $\mathbb{P}$ in the
usual sense; that is $\mathbb{E}^{\mathbb{Q}}\left[  S^{(1)}\right]
=\mathbb{E}^{\mathbb{P}}\left[  ZS^{(1)}\right]  $ for some $Z\in \{ \xi \in
L_{+}^{1}(\mathcal{F}^{(1)}):\mathbb{E}[\xi]=1\}$. $\mathbb{\pi}^{(2)}$ in
(\ref{p1}) is the usual actuarial valuation defined on $L^{\infty}%
(\mathcal{F}^{(2)})$ while $\mathbb{\pi}^{(2)}\left[  S|\mathcal{F}%
^{(1)}\right]  $ in (\ref{p2}) is a conditional valuation defined on
$L^{\infty}(\mathcal{F})$ sharing similar structure with $\mathbb{\pi}^{(2)}$.
For example, if $\mathbb{\pi}^{(2)}[S^{(2)}]=\mathbb{E}^{\mathbb{P}}[S^{(2)}%
]$, then $\mathbb{\pi}^{(2)}\left[  S|\mathcal{F}^{(1)}\right]  =\mathbb{E}%
^{\mathbb{P}}\left[  S|\mathcal{F}^{(1)}\right]  $.

\begin{example}[Comparison between MCV and ACV]\label{2stepexample4}
	(a) Consider a portfolio of pure endowments for $l_{x}$ Belgian insureds of age $x$ at time $0$. The pure endowment guarantees a sum of 1 if the policyholder is still alive at maturity. The aggregate payoff can be written as
	\begin{equation*}
	S=L_{x+T},
	\end{equation*}
	with $L_{x+T}$ the number of policyholders who survive up to the maturity time $T$. Moreover, we assume that the financial market is composed of two assets: a risk-free asset $Y^{(0)}(t)=e^{rt}$ and a longevity bond for which the payoff at maturity is $Y^{(1)}(T)=\tilde{L}_{x+T}$, the equivalent of $L_{x+T}$ but for the Dutch population. 
	First, we determine the value by the two-step actuarial valuation:
	\begin{align*}
	\Pi^{(1)}[S]   &  =\mathbb{E}^{\mathbb{P}}\left[  \mathbb{E}^{\mathbb{Q}}\left[ e^{-rT} L_{x+T}|\mathcal{F}%
	^{(2)}\right]  \right] \\
	&  = \mathbb{E}^{\mathbb{P}}\left[ e^{-rT} L_{x+T} \mathbb{E}^{\mathbb{Q}}\left[1  |\mathcal{F}^{(2)}\right]  \right] \\
	&  = e^{-rT} \mathbb{E}^{\mathbb{P}}\left[ L_{x+T}  \right].
	\end{align*}
	The actuarial-consistent valuation would suggest a full investment in the risk-free asset. Secondly, assuming that the Belgian population live slightly shorter than the Dutch population\footnote{For the reader interested in Dutch and Belgian mortality projections, we refer to \cite{antonio2017producing}}: $\mathbb{E}^{\mathbb{P}}\left[ L_{x+T}| \tilde{L}_{x+T} \right]=\beta\tilde{L}_{x+T} $ with $\beta<1$, we determine the value according to the two-step financial valuation:
	\begin{align*}
	\Pi^{(2)}[S]   &  =\mathbb{E}^{\mathbb{Q}}\left[e^{-rT}  \mathbb{E}^{\mathbb{P}}\left[L_{x+T}|\mathcal{F}%
	^{(1)}\right]  \right] \\
	&  =\mathbb{E}^{\mathbb{Q}}\left[e^{-rT} \beta \tilde{L}_{x+T} \right] \\
	&  = \beta Y^{(1)}(0),
	\end{align*}
	where $Y^{(1)}(0)$ is the current price of the longevity bond. The market-consistent valuation would then suggest a full investment in the Dutch longevity bond.
	
	(b) In order to better grasp the difference between the actuarial-consistent and market-consistent valuations, we introduce the following modelling assumptions. \newline
	Assume that the interest rate $r=0$, and the bivariate Belgian-Dutch population follows the distribution: $({L}_{x+T},\tilde{L}_{x+T}) \sim \mathcal{N}(\boldsymbol{\mu},\boldsymbol{\Sigma})$ with
	\begin{equation*}
	\boldsymbol\mu = \begin{pmatrix} \mu_1 \\ \mu_2 \end{pmatrix}, \quad
	\boldsymbol\Sigma = \begin{pmatrix} \sigma_1^2 & \rho \sigma_1 \sigma_2 \\
	\rho \sigma_1 \sigma_2  & \sigma_2^2 \end{pmatrix}.
	\end{equation*}
	Hence, both Belgian and Dutch populations are normal distributed with correlation $\rho$. Hereafter, we compare the two-step actuarial and financial valuations given by
	\begin{align}
	    	\Pi^{(1)}[S]  &  =\pi^{(2)}\left[  \mathbb{E}^{\mathbb{Q}}\left[S| \mathcal{F}^{(2)}\right]  \right] \label{twostep1}\\
	    	 	\Pi^{(2)}[S]  &  =\mathbb{E}^{\mathbb{Q}}\left[  \pi^{(2)}[S| \mathcal{F}^{(1)}]  \right] \label{twostep2}
	\end{align}
	where $\pi^{(2)}$ is the standard deviation principle \eqref{stdprinciple} and $\pi^{(2)}[\cdot| \mathcal{F}^{(1)}]$ is the conditional standard deviation principle:
	\begin{equation*}
	    \Pi [S| \mathcal{F}^{(1)} ]= \mathbb{E}^{\mathbb{P}}\left[S| \mathcal{F}^{(1)} \right]+\beta \text{Var}\left[S| \mathcal{F}^{(1)} \right].
	\end{equation*}
	The two-step actuarial valuation of $S=L_{x+T}$ is given by
	\begin{align}
	\Pi^{(1)}[S]  &  =\pi^{(2)}\left[  \mathbb{E}^{\mathbb{Q}}\left[ e^{-rT} L_{x+T}| L_{x+T}\right]  \right] \nonumber\\
	&  = \mathbb{E}^{\mathbb{P}}\left[ L_{x+T}  \right] + \beta \sigma^{\mathbb{P}}\left[ L_{x+T}  \right]  \nonumber\\
	&= \mu_1+ \beta \sigma_1.\label{be1ex4}
	\end{align}
	To determine the two-step financial valuation, we first notice that by standard results of normal distributions, we have
	\begin{equation*}
	L_{x+T}|\tilde{L}_{x+T}=x \sim \mathcal{N}\left(\mu_1+\rho\frac{\sigma_1}{\sigma_2}\left(x-\mu_2\right),  (1-\rho^2) \sigma_{1}^2\right).
	\end{equation*}
	Let us further assume that the distribution of $\tilde{L}_{x+T}$ under $\mathbb{Q}$ is
	\begin{equation*}
	\tilde{L}_{x+T}\overset{\mathbb{Q}}{\sim}\mathcal{N}(\mu_2-\sigma_2\kappa,\sigma_2^2),
	\end{equation*}	
	where $\kappa>0$ is the market price of risk for the longevity bond. Therefore, we find that
	\begin{align}
	\Pi^{(2)}[S]&=\mathbb{E}^{\mathbb{Q}}\left[\pi^{(2)} \left[L_{x+T}|\tilde{L}_{x+T}\right] \right]\nonumber\\
		&=\mathbb{E}^{\mathbb{Q}}\left[ \mu_1+\rho\frac{\sigma_1}{\sigma_2}\left(\tilde{L}_{x+T}-\mu_2\right) +\beta \sigma_{1} \sqrt{1-\rho^2}      \right]\nonumber\\
		&=\mu_1+\rho\frac{\sigma_1}{\sigma_2}\left(\mathbb{E}^{\mathbb{Q}}\left[\tilde{L}_{x+T}\right]-\mu_2 \right) +\beta \sigma_{1} \sqrt{1-\rho^2} \nonumber\\
		&=\mu_1-\rho\sigma_1 \kappa +\beta \sigma_{1} \sqrt{1-\rho^2} .\label{be2ex4}
	\end{align}
	We can compare the two-step actuarial valuation \eqref{be1ex4} with the two-step financial valuation \eqref{be2ex4}. Intuitively, the difference should reflect two aspects:
	\begin{enumerate}
		\item The dependence between Belgian and Dutch populations.
		\item The risk premium on the Dutch longevity bond.
	\end{enumerate} 
	 The results confirm the intuition: the difference is given by
	\begin{equation}
	\Pi^{(2)}[S]-\Pi^{(1)}[S]=\sigma_{1} \left[\beta \left( \sqrt{1-\rho^2}-1\right) - \rho \kappa \right].\label{diffequation}
	\end{equation}
	We observe that the higher the correlation $\rho$, the higher the difference (this reflects the point 1.). Moreover, the absolute difference is an increasing function of the market price of risk $\kappa$ (this reflects the point 2.).
	
If the valuator can choose between the risk-free investment or the Dutch longevity bond, he will go for the longevity bond if the benefits are higher than the costs, i.e. if the risk reduction of investing in the longevity bond is higher than the extra price he has to pay (given by Equation \eqref{diffequation}). The prices at time $0$ of both approaches and the residual losses at maturity are given in the table below: 
	\begin{center}
	$	\begin{array}{ l|l } 
	\text{\underline{Price at time $0$}} & \text{\underline{Residual loss at maturity}} \\ 
	\Pi^{(1)}[S]=\mu_1+ \beta \sigma_1 & R_1={L}_{x+T}-\mu_1- \beta \sigma_1 \sim \mathcal{N}(-\beta \sigma_1,\sigma_1^2)\\ 
	\Pi^{(2)}[S]=\mu_1-\rho\sigma_1 \kappa +\beta \sigma_{1} \sqrt{1-\rho^2} & R_2={L}_{x+T}-\left(\mu_1+\rho\frac{\sigma_1}{\sigma_2}\left(\tilde{L}_{x+T}-\mu_2\right) +\beta \sigma_{1} \sqrt{1-\rho^2} \right) \\
	& \sim \mathcal{N}(-\beta \sigma_{1} \sqrt{1-\rho^2},(1-\rho^2)\sigma_1^2)\\ 
	\hline
	\end{array}$
\end{center}
From the table, we observe that the investment in the longevity bond leads to a decrease in the volatility of the residual loss but an increase in the expected loss. Notice that in case of comonotonic or countermonotonic risks (i.e. $\rho=\pm1$), the claim $S$ can be completely hedged with the longevity bond and the residual loss $R_2$ equals 0. The valuator will typically go for the longevity bond if the risk reduction (computed in terms of Value-at-Risk for simplicity) is higher than the extra price to pay:
\begin{align*}
    \sigma_{1} \left[\beta \left( \sqrt{1-\rho^2}-1\right) - \rho \kappa \right]&<VaR_p[R_2]-VaR_p[R_1]\\
    & =\sigma_1 \left[\Phi^{-1} (p)-\beta  \right] (\sqrt{1-\rho^2}-1).
\end{align*}
On the other hand, if the longevity bond price is too high in comparison with the risk reduction, an actuarial-consistent valuation is then preferable. 	 $\hfill\blacktriangleleft$
\end{example}

\begin{remark}
	In this paper, we do not  argue that one method is better than another; each one has pros and cons. While the second method allows to transfer the risk to the financial market, it comes also with a price: the liabilities become totally dependent on the longevity bond. In particular, an adverse shock on the Dutch population or a counter-party's default will have a direct effect on the assets backing the liabilities. 
	
	More generally, as pointed out by \cite{vedani2017market}, market-consistent valuations are directly subject to market movements, and can lead to excess volatility, depending on the calibration sets chosen by the actuary. We also refer to \cite{rae2018review} for different concerns around the appropriateness of market-consistency to the insurance business. 
\end{remark}

In the next example, we consider the valuation of a hybrid claim via a two-step financial and actuarial valuation, and investigate the difference between the two-step operators. 

\begin{example}[Two-step valuations for hybrid claims]\label{example5TS}
	Consider an equity-linked contract for a life $(x)$, which pays the call option $(Y-K)_+$ in case the policyholder is alive at time $T=1$ and 0 otherwise. We recall that the stock $Y$ can go up to 200 or down to 50, the strike $K=100$ and the policyholder survival is modelled by the indicator $I$. Therefore, the payoff of this contract is given by 
	\begin{equation}\label{payoffS}
	S=(Y-K)_+ \times I =\left\{\begin{array}{cc}
	100, & \text{if }  Y=200, I=1,\\
	0, & \text{otherwise. }
	\end{array}\right.
	\end{equation}	
	Similar to Example \ref{2stepexample4}, we consider the two-step financial and actuarial valuations given by Equations \eqref{twostep1} and \eqref{twostep2} for the hybrid payoff \eqref{payoffS}: 
	\begin{enumerate}
		\item  \underline{Two-step financial valuation:} The value of $S$ is given by 
		\begin{align*}
		\Pi^{(1)}[S]&=\mathbb{E}^\mathbb{Q}\left[\mathbb{E}^{\mathbb{P}}\left[ S|\mathcal{F}^{(1)}\right]+\beta\sqrt{\text{Var}^{\mathbb{P}}\left[S|\mathcal{F}^{(1)}\right]} \right]\\
		&=\mathbb{E}^\mathbb{Q} \left[(Y-K)_+ \left(\mathbb{E}^\mathbb{P}\left[I|Y\right]+\beta \sigma^\mathbb{P} \left[I|Y\right] \right)\right].
		\end{align*}
		If we note that 
		\begin{equation*}
		\mathbb{E}^\mathbb{P}\left[I|Y\right]=\left\{\begin{array}{cc}
		\mathbb{P}[I=1|Y=50], & \text{if }  Y=50,\\
		\mathbb{P}[I=1|Y=200], & \text{if }  Y=200,
		\end{array}\right.
		\end{equation*}
	then we find that the two-step financial value of $S$ is
	\begin{equation}\label{twostepfin}
	\Pi^{(1)}[S]=100\, q_Y\left(p_{I|Y=200}+\beta \sqrt{p_{I|Y=200} (1-p_{I|Y=200})} \right),
	\end{equation}
	where $q_Y$ is the $\mathbb{Q}$-probability that $Y$ goes up: $q_Y=\mathbb{Q}[Y=200]$ and $p_{I|Y=200}$ is the $\mathbb{P}$-probability that the policyholder is alive given that the stock goes up: $p_{I|Y=200}=\mathbb{P}[I=1|Y=200]$.
	
	\item \underline{Two-step actuarial valuation:} The value of $S$ is given by 
	\begin{align*}
	\Pi^{(2)}\left[S\right]&=\mathbb{E}^\mathbb{P}\left[\mathbb{E}^{\mathbb{Q}}\left[ S|\mathcal{F}^{(2)} \right]\right]+\beta \sigma^\mathbb{P}\left[\mathbb{E}^{\mathbb{Q}}\left[ S|\mathcal{F}^{(2)}\right]\right]\\
	&= \mathbb{E}^\mathbb{P} \left[I \mathbb{E}^{\mathbb{Q}}\left[ (Y-K)_+|I \right] \right]+\beta \sigma^\mathbb{P} \left[I \mathbb{E}^{\mathbb{Q}}\left[ (Y-K)_+|I \right] \right].
	\end{align*}
	Noting that 
	\begin{equation*}
	\mathbb{E}^\mathbb{Q}\left[(Y-K)_+|I\right]=\left\{\begin{array}{cc}
	100 \ \mathbb{Q}[Y=200|I=0], & \text{if }  I=0,\\
	100 \ \mathbb{Q}[Y=200|I=1], & \text{if }  I=1,
	\end{array}\right.
	\end{equation*}
	then we find that the two-step actuarial value of $S$ is
	\begin{equation}\label{twostepact}
	\Pi^{(2)}\left[S\right]=100\,q_{Y|I=1}\left(p_I+\beta \sqrt{p_I (1-p_I)} \right),
	\end{equation}
	where $p_I$ is the $\mathbb{P}$-probability that the policyholder is alive: $p_I=\mathbb{P}[I=1]$ and $q_{Y|I=1}$ is the $\mathbb{Q}$-probability that the stock goes up given that the policyholder is alive: $q_{Y|I=1}=\mathbb{Q}[Y=200|I=1]$.
	\end{enumerate}
	
If we compare the two-step financial and actuarial values \eqref{twostepfin} and \eqref{twostepact}, the structure is similar but the dependence between financial and actuarial risks is taken into account differently. In the first case, it is via the $\mathbb{P}$-probability of actuarial risks given financial scenarios, i.e. $p_{I|Y=200}$ while in the second case, it is via the $\mathbb{Q}$-probability of financial risks given actuarial scenarios, i.e. $q_{Y|I=1}$. In case of independence under $\mathbb{P}$ and $\mathbb{Q}$, both valuations are equal. In case of dependence, both valuations \eqref{twostepfin} and \eqref{twostepact} will in general be different as we illustrate below: 
\begin{align*}
\Pi^{(2)}\left[S\right]-\Pi^{(1)}\left[S\right]&=100\,q_{Y|I=1}\,p_I-100\,q_Y\,p_{I|Y=200}\\
&+100\, q_{Y|I=1}\,\beta \sqrt{p_I (1-p_I)}-100\, q_Y\,\beta \sqrt{p_{I|Y=200} (1-p_{I|Y=200})}.
\end{align*}
Let us further assume that the difference between $\mathbb{P}$ and $\mathbb{Q}$ is given by a constant market price of risk $\kappa$: 
\begin{align*}
	\kappa&=q_Y-p_Y,\\
	&= q_{Y|I=1}- p_{Y|I=1}.
\end{align*}
Therefore, by Bayes' Theorem, we find that
\begin{align*}
\Pi^{(2)}\left[S\right]-\Pi^{(1)}\left[S\right]&=100\left(\frac{p_{I|Y=200}p_Y}{p_I}+\kappa\right)p_I-100\left(p_Y+\kappa\right)p_{I|Y=200}\\
&+100\, (p_{Y|I=1}+\kappa)\,\beta \sqrt{p_I (1-p_I)}\\
&-100\, (p_Y+\kappa)\,\beta \sqrt{p_{I|Y=200} (1-p_{I|Y=200})}.
\end{align*}
After simplifications, we find that 
\begin{align*}
\Pi^{(2)}\left[S\right]-\Pi^{(1)}\left[S\right]&=100\kappa\left({p_I}-p_{I|Y=200}\right)\\
&+100\kappa\beta\left(\sqrt{p_I (1-p_I)}-\sqrt{p_{I|Y=200} (1-p_{I|Y=200})}\right)\\
&+100 p_{Y|I=1}\beta\sqrt{p_I (1-p_I)}-100p_Y\beta\sqrt{p_{I|Y=200} (1-p_{I|Y=200})}.
\end{align*}
Similar to Example \ref{2stepexample4}, we observe that the difference between the two-step valuations relies mainly on
\begin{itemize}
	\item The risk premium $\kappa$ which reflects the difference between the real-world measure $\mathbb{P}$ and the risk-neutral measure $\mathbb{Q}$.
	\item The dependence between actuarial and financial risks (expressed as the difference between ${p_I}$ and $p_{I|Y=200}$ as well as the difference between $p_Y $ and $p_{Y|I=1} $).  $\hfill\blacktriangleleft$
\end{itemize}
\end{example}
\smallskip

We remark that in the literature, it is common to assume that financial and actuarial claims are independent (either under $\mathbb{P}$ or $\mathbb{Q}$\footnote{Note that independence under $\mathbb{P}$ does not necessarily imply independence under $\mathbb{Q}$, see \cite{dhaene2013dependence}.}). In that case, one can define a valuation that is MCV and ACV since the valuation is decoupled into two independent valuations, one for financial claims and one for actuarial claims. Even though extracting the $\mathbb{Q}$-dependence might be more complicated, different papers investigated the valuation under dependent financial and actuarial risks (see e.g. \citealp{liu2014generalized}, \citealp{deelstra2016role}, \citealp{zhao2018efficient}).

\addtocounter{example}{-1}
\begin{example}[continued]
	If we assume independence under $\mathbb{P}$ in the two-step financial valuation or independence under $\mathbb{Q}$ in the two-step actuarial valuation, both valuations lead to a fair valuation. This is in line with Lemma \ref{lemmaind}.
\end{example}

\section{Cost-of-capital valuation based on the two-step actuarial approach}\label{sectionNI}

\bluebis{Based on the two-step actuarial valuation introduced in Section \ref{ACvaluations}, we show how we can define a cost-of-capital valuation as a sum of a best estimate and a risk margin, different but in the same spirit as solvency regulations. Moreover, we illustrate such valuation on a portfolio of equity-linked life insurance contracts with dependent financial and actuarial risks in Section \ref{numericalapp}.}
 
\subsection{Best estimate, risk margin and cost-of-capital valuation}

\subsubsection{Best estimate}

In Article 77 of the DIRECTIVE 2009/138/EC (\cite{solvency2009directive}),
the best estimate is defined as the ``the probability-weighted average of
future cash-flows taking account of the time value of money'' (expected present
value of future cash-flows). Hence, the best estimate of an insurance liability can be
interpreted as an appropriate estimation of the expected present value based
on actual available information.

Based on our two-step actuarial valuation, we can define a broad notion of best estimate for a general claim $S$. Indeed, one can then generate stochastic actuarial scenarios, determine the financial price in each scenario and then average over the different scenarios. This leads to the following definition. 
\begin{definition}[Best estimate]
    For any claim $S \in L^{\infty}(\mathcal{F})$, the best estimate is given by 
    \begin{equation}
BE[S]=\mathbb{E}^{\mathbb{P}}\left[  \mathbb{E}^{\mathbb{Q}}\left[  S|\mathcal{F}^{(2)}\right]  \right]. \label{BE}%
\end{equation}
where $\mathbb{Q}$ is an absolutely continuous probability measure with respect to $\mathbb{P} $ conditional on $\mathcal{F}^{(2)}$.
\end{definition}

It turns out that the best estimate appears as a two-step actuarial valuation for which there is no distortion of the different measures, i.e. $ \pi^{(1)}\left[ S\right]=\mathbb{E}^\mathbb{Q}\left[S\right]$ and $ \pi^{(2)}\left[ S\right]=\mathbb{E}^\mathbb{P}\left[S\right] $. In general, the expression (\ref{BE}) could be hardly tractable since we can
possibly have an infinite number of actuarial scenarios. For practical
purposes, we will often consider the \textit{approximated} best estimate
$\widehat{BE}$ defined by%
\begin{equation}
\widehat{BE}[S]=\sum_{i=1}^{n}\mathbb{P}\left[
A_{i}\right]  \mathbb{E}^{\mathbb{Q}}\left[  \left.
S\right\vert A_{i}\right]  , \label{BEDEP2}%
\end{equation}
for a finite number $n$ of actuarial scenarios: $A_{1},A_{2},...,A_{n}%
\in\mathcal{F}^{\left(  2\right)  }$.

The best estimate in Equation \eqref{BE} appears as an average of
risk-neutral valuations which are applied to the risk which remains after
having conditioned on the actuarial filtration. Hereafter, we consider some
special cases:

\begin{itemize}
\item For any actuarial risk $S^{(2)}$, we
find that
\[
BE[ S^{(2)} ]=\mathbb{E}^{\mathbb{P}}\left[  S^{(2)}%
\right]  .
\]

\item For any product claim $S$ with independent actuarial and financial
risks (under $\mathbb{Q}$), we find that%
\begin{align*}
BE[S]  &  =\mathbb{E}^{\mathbb{P}}\left[  \mathbb{E}^{\mathbb{Q}}\left[  S^{(1)}\times S^{(2)}|\mathcal{F}%
^{(2)}\right]  \right] \\
&  = \mathbb{E}^{\mathbb{P}}\left[  S^{(2)}\times
\mathbb{E}^{\mathbb{Q}}\left[  S^{(1)}|\mathcal{F}%
^{(2)}\right]  \right] \\
&  =\mathbb{E}^{\mathbb{P}}\left[  S^{(2)}\times
\mathbb{E}^{\mathbb{Q}}\left[  S^{(1)}\right]  \right] \\
&  =\mathbb{E}^{\mathbb{P}}\left[  S^{(2)}\right]
\times \mathbb{E}^{\mathbb{Q}}\left[  S^{(1)}\right]. \numberthis \label{productformula}
\end{align*}
Hence, the actuarial risk is priced via real-world expectation and the financial risk via risk-neutral expectation. In fact, for the Equation \eqref{productformula} to hold, it is sufficient that the financial claim $S^{(1)}$ is independent from the actuarial filtration $\mathcal{F}^{(2)}$.
\end{itemize}

\subsubsection{Risk margin}

In order to motivate the risk margin, we recall that the best estimate is centered around the risk%
\[
\mathbb{E}^{\mathbb{Q}}\left[  S|\mathcal{F}%
^{(2)}\right]  .
\]
This risk represents the risk-neutral financial price of $S$ conditional on the actuarial information. Looking at the tail of this (actuarial) risk will provide information on the actuarial scenarios which yield the worst financial price. Hence, applying an actuarial valuation on this conditional financial price allows to measure the impact of the actuarial uncertainty on the risk-neutral price. This motivates the following definition.

\begin{definition}[SCR for actuarial risk]\label{defscr} For any claim $S \in L^{\infty}(\mathcal{F})$ and any actuarial valuation $\pi^{(2)}$, the SCR for actuarial risk is given by
	\begin{equation}
	SCR\left[S \right] =\pi^{(2)} \left[  \mathbb{E}^\mathbb{Q} \left[  S|\mathcal{F}^{(2)}\right]  \right]  -\mathbb{E}^{\mathbb{P}}\left[  \mathbb{E}^{\mathbb{Q}  }\left[  S|\mathcal{F}^{(2)}\right]  \right]  . \label{SCR1}%
	\end{equation}
	
\end{definition}
It turns out that the SCR appears as a two-step actuarial valuation for which we deducted the best estimate. If the valuation is coherent, thanks to the representation theorem (see Theorem \ref{dual}), the SCR for actuarial risk can be represented as 
\begin{equation*}
SCR\left[S \right]= \sup_{\tilde{\mathbb{P}}} \left\lbrace   \mathbb{E}^{\tilde{\mathbb{P}}}  \left[  \mathbb{E}^{\mathbb{Q}  }\left[  S|\mathcal{F}^{(2)}\right] \right]                 -\mathbb{E}^{\mathbb{P}}\left[  \mathbb{E}^{\mathbb{Q} }\left[  S|\mathcal{F}^{(2)}\right] \right]     \right\rbrace  
\end{equation*}
where the supremum is taken over a set of probability measures $\tilde{\mathbb{P}}$ absolutely continuous to $\mathbb{P}$. Hence, the SCR for actuarial risk can be interpreted as a worst case scenario: we can consider a family of stressed actuarial models (e.g. different mortality dynamics) and define the SCR as the value under the worst-case model.

\subsubsection{Cost-of-capital valuation of insurance liabilities}\label{RMFV}

In Solvency\ II, the fair value of insurance liabilities is defined as the sum
of the best estimate and the risk margin in which the latter is defined as the
cost of capital needed to cover the unhedgeable risks. 

In the spirit of regulatory directives, we define a cost-of-capital valuation (CoC valuation) based on our two-step actuarial valuation. This 
CoC valuation is defined as the sum of the best estimate (expected present
value) plus the risk margin (cost to cover unhedgeable risks) where the latter represents the cost of providing the SCR for actuarial risk. \bluebis{For a general overview of the cost-of-capital approach, we refer to \cite{wuthrich2013financial} and \cite{mohr_2011}.}

\begin{definition}[Cost-of-capital valuation]\label{FVDEF}
	For any claim $S \in L^{\infty}(\mathcal{F})$ and any actuarial valuation $\pi^{(2)}$, the cost-of-capital value of $S$ is defined by 
	\begin{equation}\label{fairvaluereg}
	\rho\left[S \right]  =BE\left[S \right]+iSCR\left[S \right] 
	\end{equation}
	with%
	\begin{align*}
	    BE[S]&=\mathbb{E}^{\mathbb{P}}\left[  \mathbb{E}^{\mathbb{Q}}\left[  S|\mathcal{F}^{(2)}\right]  \right] \\
	    SCR\left[S \right] &=\pi^{(2)}\left[  \mathbb{E}^\mathbb{Q} \left[  S|\mathcal{F}^{(2)}\right]  \right]  -\mathbb{E}^{\mathbb{P}}\left[  \mathbb{E}^{\mathbb{Q}  }\left[  S|\mathcal{F}^{(2)}\right]  \right],
	    \end{align*}
	where $i$ is the cost-of-capital rate and $SCR$ stands for the SCR
	for actuarial risk.
\end{definition}

\subsection{Numerical application: Portfolio of GMMB contracts}\label{numericalapp}

In this subsection, we show how to determine the cost-of-capital value \eqref{fairvaluereg} for a portfolio of guaranteed minimum maturity benefit (GMMB) contracts underwritten at time
$0$ on $l_{x}$ persons of age $x$. In
particular, we detail the numerical procedure for the best estimate and the SCR
for actuarial risk. Moreover,
we compare the fair valuation with the setting of
\cite{brennan1976pricing} in which complete diversification of mortality is assumed.

The GMMB contract offers at maturity the
greater of a minimum guarantee $K$ and a stock value if the policyholder is
still alive at that time. Let $T_{i}$ be the remaining lifetime of insured
$i,$ $i=1,2,\ldots,l_{x},$ at contract initiation. The payoff per policy can be
written as%
\begin{equation}
S=\frac{L_{x+T}}{l_{x}}\times\max\left(  Y^{(1)}(T),K\right)  , \label{gmmb}%
\end{equation}
with%
\[
L_{x+T}=\sum_{i=1}^{l_{x}}1_{\left\{  T_{i}>T\right\}  }.
\]
Here, $L_{x+T}$ is the number of policyholders who survived up to time $T$ and
$Y^{(1)}(T)$ is the value of the stock at time $T$. 

We consider a continuous time setting for the stock and the force of mortality dynamics. Let us assume that the dynamics of the stock process and the population
force of mortality are given by
\begin{align}
dY^{(1)}(t)  &  =Y^{(1)}(t)\left(  \mu dt+\sigma dW_{1}(t)\right) \label{ds}\\
d\lambda(t)  &  =c\lambda(t)dt+\xi dW_{2}(t), \label{dlambda}%
\end{align}
with $c,\xi,\mu$ and $\sigma$ are positive constants, and $W_{1}(t)=\rho
W_{2}(t)+\sqrt{1-\rho^{2}}Z(t)$. Here, $W_{2}(t)$ and $Z(t)$ are independent
standard Brownian motions. The specification of a non-mean reverting
Ornstein-Uhlenbeck (OU) process (\ref{dlambda}) for the mortality model allows
negative mortality rates. However, \cite{luciano2008mortality} and
\cite{luciano2017} showed that the probability of negative mortality rates is
 negligible with calibrated parameters. The benefit of such
specification is to allow tractability of mortality rates. Indeed, under
Equation (\ref{dlambda}), $\lambda(t)$ is a Gaussian process and $\int_{0}%
^{T}\lambda(t)dv$ is normal distributed.

\subsubsection{Best-estimate computation}

Since we want to determine the best estimate mortality, we assume that there is no risk premium in the actuarial market or,
equivalently, that Equation (\ref{dlambda}) holds under $\mathbb{P}$ and
$\mathbb{Q}$. Therefore, the calibration of the mortality intensity is
performed by estimating its dynamic under the real-world measure, and then
using it under the risk-neutral measure.\footnote{A similar approach is considered in \cite{luciano2017}} For the stock process, we define%
\[
dW_{1}^{\mathbb{Q}}(t)=\frac{\mu-r}{\sigma}dt+dW_{1}(t),
\]
where $\frac{\mu-r}{\sigma}$ represents the market price of equity risk. We
can then write the dynamics under $\mathbb{Q}$ as follows%
\begin{align}
dY^{(1)}(t)  &  =Y^{(1)}(t)\left(  rdt+\sigma dW_{1}^{\mathbb{Q}}(t)\right) \\
d\lambda(t)  &  =c\lambda(t)dt+\xi dW_{2}^{\mathbb{Q}}(t).
\end{align}
The best estimate for the aggregate payoff (\ref{gmmb}) is given by%
\[
BE[S] =\mathbb{E}^{\mathbb{P}}\left[  \mathbb{E}^{\mathbb{Q}}\left[
e^{-rT}\frac{L_{x+T}}{l_{x}}\times\max\left(  Y^{(1)}(T),K\right)  |\mathcal{F}%
^{(2)}\right]  \right]  .
\]
Under the independence assumption between the force of mortality and the stock
dynamics, one can easily show that the best estimate simplifies into%
\begin{align}
BE[S]   &  =\mathbb{E}^{\mathbb{P}}\left[\frac{L_{x+T}}{l_{x}}\right]  \times
\mathbb{E}^{\mathbb{Q}}\left[  e^{-rT}\max\left(  Y^{(1)}(T),K\right)  \right]
\\
&  =\text{ }_{T}p_{x}\left[  Y^{(1)}(0)N(d_{1})+Ke^{-rT}\left(
1-N(d_{2})\right)  \right] \\
&  =\text{ }\underbrace{\mathbb{E}^{\mathbb{P}}\left[  e^{-\int_{0}%
^{T}\lambda(v)dv}\right]  }_{_{T}p_{x}}\left[  Y^{(1)}(0)N(d_{1}%
)+Ke^{-rT}\left(  1-N(d_{2})\right)  \right]  \label{BEind}%
\end{align}
with%
\begin{align*}
d_{1}  &  =\frac{\ln\left(  \frac{Y^{(1)}(0)}{K}\right)  +(r+\frac{\sigma^{2}%
}{2})T}{\sigma\sqrt{T}},\\
d_{2}  &  =d_{1}-\sigma\sqrt{T}.
\end{align*}
We remark that the survival probability $_{T}p_{x}$ can be obtained in
closed-form (for details, see for instance \cite{mamon2004three}):%
\[
_{T}p_{x}=\mathbb{E}^{\mathbb{P}}\left[  e^{-\int_{0}^{T}\lambda(v)dv}\right]
=e^{A\lambda(0)+\frac{B}{2}},
\]
with
\begin{align}
A  &  =\frac{1}{c}\left(  1-e^{cT)}\right) \nonumber\\
B  &  =\frac{\xi^{2}}{c^{3}}\left(  cT+\frac{3}{2}-2e^{cT}+\frac{1}{2}%
e^{2cT}\right)  .
\end{align}
Under the dependence assumption, we provide in the next proposition the
approximated best estimate for the portfolio of GMMB contracts.

\begin{proposition}\label{prop1TS}
\label{prop1}If we denote by $_{T}p_{x}^{i}$ $(i=1,...,n)$ the survival rates
for each actuarial scenario\footnote{We assume that the actuarial scenarios are generated by a Monte-Carlo sample of i.i.d. observations}, the approximated best estimate for the aggregate payoff of
GMMB contracts:%
\[
\widehat{BE}[S]= \frac{1}{n} \sum_{i=1}^{n}  \mathbb{E}^{\mathbb{Q}}\left[  \left.  e^{-rT}%
\frac{L_{x+T}}{l_{x}}\times\max\left(  Y^{(1)}(T),K\right)  \right\vert L_{x+T}=l_{x}\text{
}_{T}p_{x}^{i}\right]
\]
is given by
\begin{equation}
\widehat{BE}[S]=\frac{1}{n} \sum_{i=1}^{n}\text{ }_{T}p_{x}^{i}
\left(\widetilde{Y}^{(1)}(0)N(d_{1})+e^{-rT}K\left(  1-N(d_{2})\right)  \right)  ,
\label{BEprop1}%
\end{equation}
with%
\begin{align*}
\widetilde{Y}^{(1)}(0)  &  =Y^{(1)}(0)e^{\frac{-\sigma\rho_{0}\sqrt{T}}%
{\sqrt{\frac{1}{2c}e^{2cT}-\frac{2}{c}e^{cT}+T+\frac{3}{2c}}}\left(  \frac
{c}{\xi}\ln\text{ }_{T}p_{x}^{i}+\frac{\lambda(0)}{\xi}\left(  e^{cT}%
-1\right)  \right)  }e^{-\frac{1}{2}\sigma^{2}\rho_{0}^{2}T},\\
\rho_{0}  &  =\frac{\rho\left(  \frac{1}{c}e^{cT}-\frac{1}{c}-T\right)
}{\sqrt{T\left(  \frac{1}{2c}e^{2cT}-\frac{2}{c}e^{cT}+T+\frac{3}{2c}\right)
}},\\
d_{1}  &  =\frac{\ln\left(  \frac{\widetilde{Y}^{(1)}(0)}{K}\right)  +\left(
r+\frac{1}{2}\sigma^{2}(1-\rho_{0}^{2})\right)  T}{\sigma\sqrt{\left(
1-\rho_{0}^{2}\right)  T}},\\
d_{2}  &  =d_{1}-\sigma\sqrt{\left(  1-\rho_{0}^{2}\right)  T}.
\end{align*}

\end{proposition}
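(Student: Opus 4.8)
The plan is to reduce each summand of $\widehat{BE}[S]$ to a conditional Black--Scholes price. Each actuarial scenario $i$ is generated from the mortality dynamics \eqref{dlambda} and is summarized by its integrated intensity, with ${}_Tp_x^i=e^{-\int_0^T\lambda^i(v)\,dv}$; conditioning on scenario $i$ therefore amounts to conditioning on $\int_0^T\lambda(v)\,dv=-\ln{}_Tp_x^i$, and on this event the diversifiable survival factor satisfies $L_{x+T}/l_x={}_Tp_x^i$. Pulling this constant out of the $\mathbb{Q}$-expectation leaves, for each $i$,
$${}_Tp_x^i\,\mathbb{E}^{\mathbb{Q}}\!\left[e^{-rT}\max\!\left(Y^{(1)}(T),K\right)\,\middle|\,\textstyle\int_0^T\lambda(v)\,dv = -\ln{}_Tp_x^i\right].$$
Writing $\max(Y,K)=(Y-K)_++K$, the problem becomes the evaluation of a European call price conditional on the realized integrated mortality intensity.

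The core step is to determine the conditional law of $Y^{(1)}(T)$ given $\int_0^T\lambda(v)\,dv$. I would first solve \eqref{dlambda} to get $\lambda(t)=\lambda(0)e^{ct}+\xi\int_0^t e^{c(t-s)}\,dW_2(s)$ and apply the stochastic Fubini theorem to obtain
$$\int_0^T\lambda(v)\,dv = \lambda(0)\frac{e^{cT}-1}{c} + \frac{\xi}{c}\int_0^T\!\big(e^{c(T-s)}-1\big)\,dW_2(s).$$
Since $W_1^{\mathbb{Q}}(T)=\rho W_2(T)+\sqrt{1-\rho^2}\,Z(T)+\text{(drift)}$, the pair $\big(\log Y^{(1)}(T),\int_0^T\lambda(v)\,dv\big)$ is jointly Gaussian under $\mathbb{Q}$. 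An Itô-isometry computation then yields the variance $\tfrac{\xi^2}{c^2}\big(\tfrac{1}{2c}e^{2cT}-\tfrac{2}{c}e^{cT}+T+\tfrac{3}{2c}\big)$ of the integrated intensity and its covariance with the log-return, and I would verify that the associated correlation coefficient collapses to exactly $\rho_0$.

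Given joint normality, the conditional distribution of $\log Y^{(1)}(T)$ is Gaussian with variance reduced to $\sigma^2(1-\rho_0^2)T$ and a mean shifted by the linear-regression term. I would pin down the modified spot $\widetilde{Y}^{(1)}(0)$ through the martingale requirement $\mathbb{E}^{\mathbb{Q}}[Y^{(1)}(T)\mid \int_0^T\lambda\,dv]=\widetilde{Y}^{(1)}(0)e^{rT}$; substituting $\int_0^T\lambda\,dv=-\ln{}_Tp_x^i$ and simplifying reproduces the stated expression for $\widetilde{Y}^{(1)}(0)$. The conditional call is then the Black--Scholes formula with spot $\widetilde{Y}^{(1)}(0)$, volatility $\sigma\sqrt{1-\rho_0^2}$ and the given $d_1,d_2$, exactly as in the bivariate-normal computation of Example~\ref{2stepexample4}(b), so that the conditional value of $\max(Y^{(1)}(T),K)$ equals $\widetilde{Y}^{(1)}(0)N(d_1)+Ke^{-rT}(1-N(d_2))$. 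Averaging the summands with weight $1/n$ gives \eqref{BEprop1}.

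The main obstacle is bookkeeping rather than conceptual: carrying out the stochastic Fubini step and the Itô-isometry variance/covariance computations so that the intricate constants collapse into the stated $\rho_0$, the quantity $\tfrac{1}{2c}e^{2cT}-\tfrac{2}{c}e^{cT}+T+\tfrac{3}{2c}$, and $\widetilde{Y}^{(1)}(0)$. The one point requiring genuine care is that we condition only on the one-dimensional summary $\int_0^T\lambda\,dv$ (equivalently ${}_Tp_x^i$) rather than on the full actuarial filtration $\mathcal{F}^{(2)}$; this is precisely why the effective correlation is the attenuated $\rho_0$ instead of $\rho$, and why the conditional volatility becomes $\sigma\sqrt{1-\rho_0^2}$.
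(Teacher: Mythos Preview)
Your proposal is correct and follows essentially the same route as the paper's proof: solve \eqref{dlambda} explicitly, use stochastic Fubini to write $\int_0^T\lambda(s)\,ds$ as an affine function of a Gaussian Wiener integral, compute its variance and its covariance with $W_1(T)$ via It\^o isometry to obtain $\rho_0$, decompose $W_1(T)$ accordingly, substitute the conditioning value $-\ln{}_Tp_x^i$, and read off the conditional Black--Scholes price with spot $\widetilde{Y}^{(1)}(0)$ and volatility $\sigma\sqrt{1-\rho_0^2}$. The only cosmetic difference is that you identify $\widetilde{Y}^{(1)}(0)$ through the conditional martingale relation $\mathbb{E}^{\mathbb{Q}}[Y^{(1)}(T)\mid\int_0^T\lambda]=\widetilde{Y}^{(1)}(0)e^{rT}$, whereas the paper obtains it by direct substitution of the decomposed $W_1(T)$ into $Y^{(1)}(T)$; these are equivalent.
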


\begin{proof}
The proof based on classical arguments of stochastic calculus can be found in
Appendix \ref{appendix1}.
\end{proof}

The approximated best estimate (\ref{BEprop1}) appears as an average of
Black-Scholes call option prices which are adjusted for the dependence between
the population force of mortality and the stock price processes. In each call
option, there is an adjustement of the current stock price $Y^{(1)}(0)$ to
$\widetilde{Y}^{(1)}(0)$, taking into account the realized survival rate
$_{T}p_{x}^{i}$ in each actuarial scenario. It is also worth noticing that in
case of independence ($\rho=0$), the approximated best estimate (\ref{BEprop1}%
) converges to the best estimate (\ref{BEind}).

To determine the best estimate \eqref{BEprop1}, we only need to generate
survival rates $_{T}p_{x}^{i}$ ($i=1,...,n$) and plug them into the
Black-Scholes option pricing formulas. Since the force of mortality dynamics
is given by%
\[
d\lambda(t)=c\lambda(t)dt+\xi dW_{2}(t),
\]
one can prove (for details, see Appendix \ref{appendix1}) that%
\[
\ln\text{ }_{T}p_{x}=-\int_{0}^{T}\lambda(s)ds\sim N\left(  \frac{\lambda
(0)}{c}\left(  e^{cT}-1\right)  ,\frac{\xi^{2}}{c^{3}}\left(  \frac{1}%
{2}e^{2cT}-2e^{cT}+cT+\frac{3}{2}\right)  \right)  .
\]
We generate $n=100000$ mortality paths. The benchmark parameters for the stock
and the force of mortality are given in Table \ref{table1}. The mortality
parameters follow from \cite{luciano2017} while the financial parameters are
based on \cite{bernard2016semi}. The mortality parameters correspond to UK
male individuals who are aged $55$ at time $0.$%

\begin{table}[h] \centering
\begin{tabular}
[c]{|l|}\hline
Parameter set for numerical analysis\\\hline
Force of mortality model: $c=0.0750,\xi=0.000597,\lambda(0)=0.0087.$\\
Financial model: $r=0.02,T=10,Y^{(1)}(0)=1,K=1,\sigma=0.2.$\\\hline
\end{tabular}
\caption{Parameter values used in the numerical illustration.}\label{table1}%
\end{table}%

\begin{table}[h]
	\centering
\begin{tabular}{@{}rrrr@{}}
\toprule
$\rho$ & Best estimate & $\rho$ & Best estimate \\ \midrule
-1.0 & 1.01132             & 0 & 1.00667            \\
-0.9 & 1.01086            & 0.1 & 1.00618           \\
 -0.8 & 1.01041              & 0.2 & 1.00568           \\
   -0.7 & 1.00995              &  0.3 & 1.00517              \\
 -0.6 & 1.00950            &  0.4 & 1.00466           \\
   -0.5 & 1.00904            &   0.5 & 1.00414           \\
 -0.4 & 1.00858            &  0.6 & 1.00360           \\
 -0.3 & 1.00811            &   0.7 & 1.00307             \\
 -0.2 & 1.00764            &   0.8 & 1.00252            \\
   -0.1 & 1.00716            &  0.9 & 1.00196            \\
   &        &   1.0 & 1.00141           \\ \bottomrule
\end{tabular}
	\caption{Best estimate for the GMMB contract using Equation \eqref{BEprop1}.}\label{tab:1}
\end{table}

Table \ref{tab:1} displays the best estimate per policy obtained using Equation (\ref{BEprop1}) for a range of correlation coefficients: $\rho\in
\lbrack-1,1]$. We observe that the best estimate slightly decreases with the increase
of the correlation parameter. This can be justified by a compensation effect
between the mortality and the stock dynamics:

\begin{itemize}
\item In case of positive dependence, high mortality scenarios (respectively
low mortality scenarios) are linked with high stock values (respectively low
stock values). In consequence, the expected value of the claim%
\[
S=\frac{L_{x+T}}{l_{x}}\times\max\left(  Y^{(1)}(T),K\right)
\]
will be reduced since high values of survivals $L_{x+T}$ will be associated
with low financial guarantees, $\max\left(  Y^{(1)}(T),K\right)  $, and vice-versa.

\item On the other hand, in case of negative dependence, high survival rates
will be linked with high financial guarantees, which implies a higher
uncertainty and an increase of the best estimate.
\end{itemize}

\subsubsection{Cost-of-capital value computation}

The cost-of-capital value of the insurance liability $S$ is then determined by%
\begin{equation*}
\rho\left[S \right]  =BE\left[S \right]+i SCR\left[S \right],
\end{equation*}
where the cost-of-capital rate $i$ is fixed at $6\%$ and the SCR for actuarial risk is given by
$$
S C R[S]=\pi^{(2)}\left[\mathbb{E}^{\mathbb{Q}}\left[S \mid \mathcal{F}^{(2)}\right]\right]-\mathbb{E}^{\mathbb{P}}\left[\mathbb{E}^{\mathbb{Q}}\left[S \mid \mathcal{F}^{(2)}\right]\right]
$$
for some coherent actuarial valuation $\pi^{(2)}$. For this numerical illustration, we consider the TVaR measure with a confidence level $p=$ $0.95$.

Figure \ref{FVGRAPH} represents the CoC value of $S$ for a range of
correlation coefficients: $\rho\in[-1,1]$. Overall, we observe an increase of the CoC value
of the GMMB contract under dependent mortality and equity risks. However, this effect is less pronounced for positive dependence. By comparison, the fair value of
$S$ under the assumption that mortality can be completely diversified (denoted
by $\rho^{\text{B-S}}$ for \cite{brennan1976pricing}), is given by Equation (\ref{BEind}):%
\begin{align}
\rho^{\text{B-S}}\left[S\right] &  =\mathbb{E}^\mathbb{P} \left[\frac{L_{x+T}}{l_{x}}\right]  \times
\mathbb{E}^{\mathbb{Q}}\left[  e^{-rT}\max\left(  Y^{(1)}(T),K\right)  \right]
\\
&  =\text{ }_{T}p_{x}\left[  Y^{(1)}(0)N(d_{1})+Ke^{-rT}\left(
1-N(d_{2})\right)  \right] \\
&  =1.0067.\nonumber
\end{align}

\begin{figure}[h]
	\begin{center}
		\includegraphics[width=0.8\textwidth]{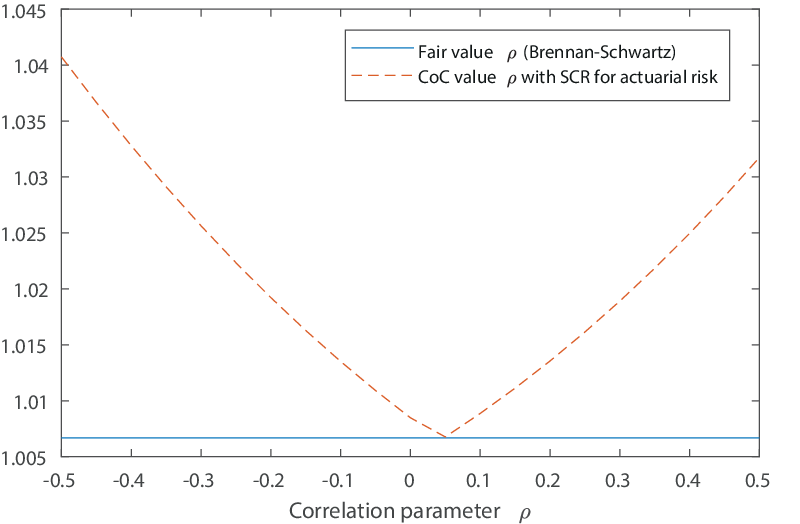}
		\caption{Comparison between the Cost-of-capital value for the GMMB contract under the two-step actuarial approach and the fair value of \cite{brennan1976pricing}.}%
		\label{FVGRAPH}%
	\end{center}
\end{figure}

From Figure \ref{FVGRAPH}, we clearly observe that this assumption underestimates the fair value of the contract since it does not take into account the actuarial uncertainty and the possible dependence with the financial market.

\section{Concluding remarks}\label{Conclu4}

In this paper, we have proposed a general actuarial-consistent valuation for insurance liabilities based on a two-step actuarial valuation. Actuarial-consistency requires that traditional actuarial valuation based on diversification applies to all actuarial risks. We have shown that every two-step actuarial valuation is actuarial-consistent and in the coherent setting, any actuarial-consistent valuation has a two-step actuarial valuation representation. We also studied under which conditions it is feasible to define a valuation that is actuarial-consistent and market-consistent. In general, it is not possible and the valuator should decide whether the valuation is driven by current market prices or historical actuarial information. 

Based on our two-step actuarial valuation, we have defined a cost-of-capital valuation in which the valuation is defined as the sum of a best estimate (expected value) and a risk margin (cost of providing the SCR for actuarial risks). The detailed numerical illustration has shown the important impact on risk management when
relaxing the independence assumption between actuarial and financial risks. In an extended B-S financial market, we determined the cost-of-capital value of a GMMB contract under dependent financial and actuarial risks. It turns out that the dependence structure has an important impact on the fair valuation and the related SCR.

As pointed out by \cite{liu2014generalized}, Solvency
II Directive strongly recommends the testing of capital adequacy requirements on the assumption of mutual dependence between financial markets and life insurance markets. In that respect, we believe that our two-step framework provides a plausible setting for the valuation of insurance liabilities with dependent financial and actuarial risks.

\section*{Acknowledgements}

Karim Barigou acknowledges the financial support of the Research Foundation - Flanders (FWO) (PhD funding) and the Joint Research Initiative on ``Mortality Modeling and Surveillance” funded by AXA Research Fund (postdoc funding). The authors would also like to thank Jan Dhaene from KU Leuven for useful discussions and helpful comments on this manuscript. \blue Finally, we sincerely thank the editor and anonymous referees for their pertinent remarks that significantly improved the quality of the manuscript.\black

\bibliographystyle{agsm}
\bibliography{bibliography2}

\newpage

\appendix

\section{Appendix: Proof of Proposition \ref{prop1TS}}\label{appendix1}

\begin{proof}
We recall that the dynamics of the stock process and the population force of
mortality under $\mathbb{Q}$ are given by%
\begin{align}
dY^{(1)}(t)  &  =Y^{(1)}(t)\left(  rdt+\sigma dW_{1}(t)\right) \label{app1}\\
d\lambda(t)  &  =c\lambda(t)dt+\xi dW_{2}(t) \label{app2}%
\end{align}
with $c,\xi,\mu$ and $\sigma_{1}$ are positive constants, and $W_{1}(t)=\rho
W_{2}(t)+\sqrt{1-\rho^{2}}Z(t)$. Here, $W_{2}(t)$ and $Z(t)$ are independent
standard Brownian motions under $\mathbb{Q}$. From (\ref{app2}), we note that%
\begin{align*}
d\left(  e^{-ct}\lambda(t)\right)   &  =-ce^{-ct}\lambda(t)dt+e^{-ct}%
d\lambda(t)\\
&  =\xi e^{-ct}dW_{2}(t).
\end{align*}
Hence, the force of mortality is a Gaussian process:
\[
\lambda(t)=\lambda(0)e^{ct}+\xi\int_{0}^{t}e^{-c(u-t)}dW_{2}(u).
\]
Moreover, we find that%
\begin{align*}
\int_{0}^{T}\lambda(s)ds  &  =\frac{\lambda(0)}{c}\left(  e^{cT}-1\right)
+\xi\int_{0}^{T}\int_{0}^{s}e^{-c(u-s)}dW_{2}(u)ds\\
&  =\frac{\lambda(0)}{c}\left(  e^{cT}-1\right)  +\xi\int_{0}^{T}\int
_{u}^{T}e^{-c(u-s)}dsdW_{2}(u)\\
&  =\frac{\lambda(0)}{c}\left(  e^{cT}-1\right)  +\frac{\xi}{c}\int_{0}%
^{T}\left(  e^{-c(u-T)}-1\right)  dW_{2}(u)\\
&  =\frac{\lambda(0)}{c}\left(  e^{cT}-1\right)  +\frac{\xi}{c}X_{T},
\end{align*}
with%
\[
X_{T}=\int_{0}^{T}\left(  e^{-c(u-T)}-1\right)  dW_{2}(u)\sim N\left(
0,\frac{1}{2c}e^{2cT}-\frac{2}{c}e^{cT}+T+\frac{3}{2c}\right)  .
\]
We can also remark that%
\begin{align*}
\mathbb{E}\left(  W_{1}(T)X_{T}\right)   &  =\mathbb{E}\left(  \int_{0}^{T}dW_{1}(u)\int_{0}%
^{T}\left(  e^{-c(u-T)}-1\right)  dW_{2}(u)\right) \\
&  =\rho\left(  \frac{1}{c}e^{cT}-\frac{1}{c}-T\right)  ,
\end{align*}
which leads to
\[
corr\left(  W_{1}(T),X_{T}\right)  =\frac{\rho\left(  \frac{1}{c}e^{cT}%
-\frac{1}{c}-T\right)  }{\sqrt{T\left(  \frac{1}{2c}e^{2cT}-\frac{2}{c}%
e^{cT}+T+\frac{3}{2c}\right)  }}\equiv\rho_{0}.
\]
We can then assume that
\[
W_{1}(T)=\frac{\rho_{0}\sqrt{T}}{\sqrt{\frac{1}{2c}e^{2cT}-\frac{2}{c}%
e^{cT}+T+\frac{3}{2c}}}X_{T}+\sqrt{T\left(  1-\rho_{0}^{2}\right)  }Z,
\]
where $Z$ is a standard normal r.v. independent of $X_{T}$.\newline
From%
\[
e^{-\int_{0}^{T}\lambda(s)ds}=\text{ }_{T}p_{x}^{i},
\]
we find that
\[
X_{T}=-\frac{c}{\xi}\ln\text{ }_{T}p_{x}^{i}-\frac{\lambda(0)}{\xi}\left(
e^{cT}-1\right)  .
\]
The stock price at time $T$ can be written as
\begin{align*}
Y^{(1)}(T)  &  =Y^{(1)}(0)e^{(r-\frac{1}{2}\sigma^{2})T+\sigma W_{1}(T)}\\
&  =Y^{(1)}(0)e^{\frac{-\sigma\rho_{0}\sqrt{T}}{\sqrt{\frac{1}{2c}%
e^{2cT}-\frac{2}{c}e^{cT}+T+\frac{3}{2c}}}\left(  \frac{c}{\xi}\ln\text{ }%
_{T}p_{x}^{i}+\frac{\lambda(0)}{\xi}\left(  e^{cT}-1\right)  \right)
}e^{(r-\frac{1}{2}\sigma^{2})T+\sigma\sqrt{1-\rho_{0}^{2}}\sqrt{T}Z}\\
&  =\widetilde{S}^{(1)}(0)e^{\left(r-\frac{1}{2}\sigma^{2}\left(  1-\rho_{0}%
	^{2}\right)\right)  T+\sigma\sqrt{1-\rho_{0}^{2}}\sqrt{T}Z},
\end{align*}
with
\[
\widetilde{S}^{(1)}(0)=Y^{(1)}(0)e^{\frac{-\sigma\rho_{0}\sqrt{T}}{\sqrt
{\frac{1}{2c}e^{2cT}-\frac{2}{c}e^{cT}+T+\frac{3}{2c}}}\left(  \frac{c}{\xi
}\ln \text{ }_{T}p_{x}^{i}+\frac{\lambda(0)}{\xi}\left(  e^{cT}-1\right)  \right)
}e^{-\frac{1}{2}\sigma^{2}\rho_{0}^{2}T}\text{.}%
\]
Finally, we find that
\begin{align*}
&  \mathbb{\mathbb{E}}^{\mathbb{Q}}\left[  \left. e^{-rT} L_{x+T}\times\max\left(
Y^{(1)}(T),K\right)  \right\vert e^{-\int_{0}^{T}\lambda(s)ds}=\text{ }%
_{T}p_{x}^{i}\right] \\
&  =l_{x}\text{ }_{T}p_{x}^{i}\mathbb{E}^{\mathbb{Q}}\left[  \left.
 e^{-rT}K+ e^{-rT}\max\left(  Y^{(1)}(T)-K,0\right)  \right\vert e^{-\int_{0}^{T}\lambda
(s)ds}=\text{ }_{T}p_{x}^{i}\right] \\
&  =l_{x}\text{ }_{T}p_{x}^{i}\left(  \widetilde{S}^{(1)}(0)N(d_{1})+e^{-rT}K\left(
1-N(d_{2})\right)  \right)  ,
\end{align*}
which ends the proof.
\end{proof}

\end{document}